\newcommand{\comment}[1]{}
	\newtheoremstyle{redstyle}
	{3pt}
	{3pt}
	{\color{black}}
	{}
	{\color{red}\bfseries}
	{:}
	{.5em}
	{}
	\theoremstyle{redstyle} 
\newif\ifarticle
\newtheorem{theorem}{Theorem}
\newtheorem{corollary}{Corollary}
\newtheorem{lemma}{Lemma}
\newtheorem{claim}{Claim}
\newtheorem{fact}{Fact}
\newtheorem{proposition}{Proposition}
\newif\ifrobocza
\newif\ifarxiv
\newcommand{\labell}[1]{\label{#1}\marginpar{#1}}
\newcommand{\tj}[1]{{\color{blue}{#1}}}
\newcommand{\kn}[1]{{\color{green}{#1}}}
\newcommand{\xx}[1]{{\color{red}{#1}}}
\newcommand{\labell}[1]{\label{#1}}
\newcommand{\tj}[1]{#1}
\newcommand{\kn}[1]{#1}
\newcommand{\xx}[1]{#1}
\newcommand{\prob}{\text{Prob}}
\newcommand{\NAT}{{\mathbb N}}
\renewcommand{\thefootnote}{\fnsymbol{footnote}}
\newcommand{\remove}[1]{}
\newcommand{\CC}{\text{CC}}
\newcommand{\MST}{\text{MST}}
\newif\iffull
\begin{document}

	\title{MST in $O(1)$ Rounds of Congested Clique
	}

	\author{Tomasz Jurdzi\'{n}ski\footnote{email: \url{tju@cs.uni.wroc.pl}}
	}
	\author{Krzysztof Nowicki\footnote{email: \url{knowicki@cs.uni.wroc.pl}} 
	}
	
	\affil{Institute of Computer Science, University of Wroc{\l}aw, Poland}

	\date{}
	
	\maketitle 
	
	\begin{abstract}

We present a distributed randomized algorithm finding \textit{Minimum Spanning Tree} (MST) of a given graph in $O(1)$ rounds, with high probability, in the congested clique model.

The input graph in the congested clique model is a graph of $n$ nodes, where each node initially knows only its incident edges. The communication graph is a clique with limited edge bandwidth: each two nodes (not necessarily neighbours in the input graph) can exchange $O(\log n)$ bits.

As in previous works, the key part of the \MST{} algorithm is an efficient \textit{Connected Components} (CC) algorithm. However, unlike the former approaches, we do not aim at simulating the standard Boruvka\tj{'s} algorithm, at least at initial stages of the \CC{} algorithm. Instead, we develop a new technique which combines connected components of sample sparse subgraphs of the input graph in order to accelerate the process of uncovering connected components of the original input graph. More specifically, we develop a sparsification technique which reduces an initial \CC{} problem in $O(1)$ rounds to its two restricted instances. The former instance has a graph with maximal degree $O(\log \log n)$ as the input -- here our sample-combining technique helps. In the latter instance, a partition of the input graph into $O(n/\log\log n)$ connected components is known. This gives an opportunity to apply previous algorithms to determine connected components in $O(1)$ rounds.

\tj{Our result addresses a problem proposed by Lotker et al.\ [SPAA 2003; SICOMP 2005] and} improves over previous $O(\log^* n)$ algorithm of Ghaffari et al.\ [PODC 2016], and $O(\log \log \log n)$ algorithm of Hegeman et al.\ [PODC 2015]. It also determines $\Theta(1)$ round complexity in the congested clique for \MST{}, as well as other graph problems, including bipartiteness, cut verification, s-t connectivity, and cycle containment.
\end{abstract}

	
	\small
		\noindent

	
\noindent\textit{Keywords:} {congested clique, connected components, minimum spanning tree, randomized algorithms, broadcast, unicast}
	
\renewcommand{\thefootnote}{\arabic{footnote}}

\newpage

\section{Introduction}\labell{s:intro}
The congested clique model of distributed \tj{computing} attracted much attention in algorithmic community in recent years. \tj{Initially, each node knows its incident edges in the input graph $G(V,E)$. Unlike the classical CONGEST model \cite{peleg-book}, the communication graph connects each pair of nodes, even if they are not neighbours in the input graph, i.e., in each round, each pair of nodes can exchange a message of size $O(\log n)$ bits.}

The main algorithm-theoretic motivation of the model is to understand the role of congestion in distributed computing. The well-known LOCAL model of distributed computing ignores congestion by allowing unlimited size of transmitted messages \cite{peleg-book,Linial92} and focuses on \emph{locality}. The CONGEST model on the other hand takes congestion into account by limiting the size of transmitted messages. Simultaneously, locality plays an important role as well in the CONGEST model, since direct communication is possible only between neighbors of the input graph. The congested clique is considered as a complement, which focuses \tj{solely} on congestion.

Some variants and complexity measures for the congested clique have applications to efficiency of algorithms in other models adjusted to current computing challenges such as $k$-machine big data model \cite{KlauckNP015} or MapReduce \cite{HegemanP15,KarloffSV10}.

\subsection{Related work}
The general congested clique model as well as its limited variant called broadcast congested clique were studied in several papers, e.g. \cite{Lotker:2003:MCO:777412.777428,HegemanPPSS15,GhaffariParter2016,DruckerKO13,BeckerMRT14,Lenzen:2013:ODR:2484239.2501983,MT2016}. The recent Lenzen's \cite{Lenzen:2013:ODR:2484239.2501983} constant time routing and sorting algorithm in the unicast congested clique exhibited strength of this model and triggered a new wave \kn{of research}.

Besides general interest in the congested clique, specific attention has been given to \MST{} and connectivity. Lotker et al.\ \cite{Lotker:2003:MCO:777412.777428} \kn{proposed} a $O(\log \log n)$ round deterministic algorithm for \MST{} in the unicast model. An alternative solution of the same complexity has been presented recently \cite{Korhonen16}. The best known randomized solution for \MST{} works in $O(\log^*n)$ rounds \cite{GhaffariParter2016}, improving the recent $O(\log\log\log n)$ bound \cite{HegemanPPSS15}. The result from \cite{HegemanPPSS15} uses a concept of linear graph sketches \cite{AhnGM12}, while the authors of \cite{GhaffariParter2016} introduce new sketches, which are sensitive to the degrees of nodes and adjusted to the congested clique model. Reduction of the number of transmitted messages in the \MST{} algorithms was studied in \cite{DBLP:conf/fsttcs/PS16}. In contrast to the general (unicast) congested clique, no sub-logarithmic round algorithm is known for \MST{} in the broadcast congested clique, while the first sub-logarithmic solution for the \CC{} problem has been obtained only recently \cite{JurdzinskiN17}. An extreme scenario that the algorithm consists of one \kn{round} in which each node can send only one message has also been considered. As shown in \cite{AhnGM12}, connectivity can be solved with public random bits in this model, provided nodes can send messages of size $\Theta(\log^3 n)$.

In \cite{DruckerKO13}, a simulation of powerful classes of bounded-depth circuits in the congested clique is presented, which points out to the power of the congested clique and explains difficulty in obtaining lower bounds.

\subsection{Our result}
The main result of the paper determines $O(1)$ round complexity of the \MST{} problem in the congested clique.
\begin{theorem}
\labell{t:MST} 
There is a randomized algorithm in the congested clique model that computes a minimum spanning tree in $O(1)$ rounds, with high probability.
\end{theorem}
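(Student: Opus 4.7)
The plan is to reduce Theorem~\ref{t:MST} to an $O(1)$-round Connected Components algorithm, whose construction is the main technical contribution sketched in the abstract. Concretely, I would first establish that if \CC{} can be solved in $O(1)$ rounds on the unicast congested clique with high probability, then so can \MST{}. For this I would adopt a Karger-Klein-Tarjan style sampling reduction: sample each edge independently with probability $1/2$ to form a subgraph $F$, compute $\MST(F)$ via a constant number of recursive applications of the \CC{} algorithm (one per Boruvka-like phase on the sampled graph), then discard every $F$-heavy edge of $G$ (an edge is $F$-heavy if it is strictly maximum on the unique $F$-path joining its endpoints). The classical KKT argument guarantees that the number of $F$-light edges is $O(n)$ with high probability, so after a constant number of such peeling rounds the surviving edges can be routed to a single coordinator via Lenzen's $O(1)$-round routing and the tree recovered. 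Testing $F$-heaviness in $O(1)$ rounds itself reduces to forest-path queries handled by the same routing.

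Once \MST{} is reduced to \CC{}, the plan is to realize the \CC{} algorithm in $O(1)$ rounds by the sparsification announced in the abstract. First, I would separate nodes of degree $O(\log \log n)$ from nodes of higher degree. For the high-degree nodes I would sample each incident edge with a carefully tuned probability and use Lenzen's routing to disseminate survivors, so that every high-degree node is merged with a representative in a way that already exhibits a known partition of $V$ into $O(n/\log\log n)$ connected super-components. This produces sub-instance~(b): a \CC{} problem where the partition into $O(n/\log \log n)$ components is part of the input, which can be fed into the Ghaffari-Parter or Hegeman et al.\ algorithms; the $O(\log^* n)$ dependence of those algorithms on the number of super-nodes collapses to $O(1)$ when the count is already $O(n/\log \log n)$.

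The complementary sub-instance~(a) is \CC{} on a graph with maximum degree $O(\log \log n)$. Here the plan is to deploy the sample-combining technique: draw $O(1)$ independent sparse subgraphs, determine the components of each subgraph in parallel in $O(1)$ rounds using the low degree bound (each such subgraph has $O(n \log \log n)$ edges, fitting the routing budget), then overlay these partitions to recover the components of the input. The correctness rests on a concentration statement to the effect that the union of the sampled partitions reconstructs each original component with high probability, and the output of the combining step itself becomes an instance of type~(b) to which the previous paragraph applies.

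The hard part, and where most of the technical effort will lie, is the sample-combining step for sub-instance~(a). The subtlety is to choose the sampling probabilities so that (i)~each sampled subgraph is sparse enough to be processed in $O(1)$ rounds, (ii)~together the samples certify, with high probability, every edge-disjoint ``bridge'' needed to stitch a low-degree component back together, and (iii)~the overlay of the partial partitions yields a meta-graph with $O(n/\log \log n)$ coarse components so that the reduction to sub-instance~(b) closes. Controlling the failure probability uniformly over all original components while respecting the $O(\log n)$-bit edge bandwidth of the clique is the main obstacle; the rest of the argument is a routing-and-concentration bookkeeping on top of known congested-clique primitives.
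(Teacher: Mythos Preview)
Your high-level decomposition into Sparsification and Size-reduction matches the paper, but two central steps do not stand up as described.

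\textbf{The \MST{}-to-\CC{} reduction.} You write that you will ``compute $\MST(F)$ via a constant number of recursive applications of the \CC{} algorithm (one per Boruvka-like phase on the sampled graph)''. A constant number of Boruvka phases does not yield an \MST{}; it yields a partition into $n/2^{O(1)}$ components. The KKT recursion depth is likewise not $O(1)$. The paper does not attempt to compute $\MST(F)$ by Boruvka; instead it invokes the Hegeman et al.\ reduction (Lemmas~\ref{l:red1:hegeman} and~\ref{l:red2:hegeman}): one KKT step reduces general \MST{} to two \MST{} instances with $O(n^{3/2})$ edges, and such a sparse \MST{} instance is then reduced to $m=\sqrt{n}$ nested \CC{} instances $G_1,\ldots,G_m$ with $E_1\subseteq\cdots\subseteq E_m$. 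The actual work in Section~\ref{s:MST} is showing that the \CC{} algorithm can be run on all $\sqrt{n}$ instances \emph{simultaneously} in $O(1)$ rounds; this requires a proxy scheme (Lemma~\ref{l:parallel1}) that exploits the nesting of the $E_i$ to keep per-node message counts $O(n)$, and a separate argument (Lemma~\ref{l:parallel2}) for the parallel executions of Alg.~\ref{a:componentreduction}. Your proposal does not touch this parallelization issue at all, and without it the reduction does not close.

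\textbf{The sample-combining step.} You propose to ``draw $O(1)$ independent sparse subgraphs'' and find their components in parallel. The paper draws $m=\sqrt{n}$ samples, not $O(1)$; each edge is placed in each sample independently with probability $1/\log\log n$, so a sample has $O(n)$ edges whp and can be shipped to a single boss node via Lenzen's routing. The $\sqrt{n}$ figure is essential: Proposition~\ref{p:stprobability} shows that a fixed spanning tree of size $O(\sqrt{\log n})$ appears in at least one sample with failure probability $(1-(1/\log\log n)^{s-1})^{\sqrt{n}}=n^{-\omega(1)}$, and this exponent $\sqrt{n}$ is what makes the subsequent union bounds in Propositions~\ref{p:small:uncovered} and~\ref{p:large:afew} go through. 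With $O(1)$ samples the failure probability is $\Theta(1)$ and the argument collapses. Moreover, your plan to ``determine the components of each subgraph in parallel in $O(1)$ rounds using the low degree bound'' is circular: the low-degree \CC{} problem is precisely what you are trying to solve, and each sample still has $\Theta(n\log\log n)$ edges if you do not subsample, which cannot be routed to a single node. The paper's mechanism is different: each boss computes components of its $O(n)$-edge sample \emph{locally}, leaders are chosen with probability $1/\log\log n$, and the combining step routes, for each node $u_j$ and each sample $i$, one carefully chosen edge toward a leader (or toward the min-ID node) back to a coordinator; the analysis then bounds the number of ``bad'' nodes not reaching a leader by $O(n/\log\log n)$ via an almost-partition into $\Theta(\sqrt{\log n})$-size trees and independent indicator variables.
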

Using standard reductions of some graph problems to the connectivity problem, we establish $O(1)$ round complexity of several graph problems in the congested clique model.
\begin{corollary}
\labell{c:reductions}
There are randomized distributed algorithms that solve the following verification problems in the congested clique model in $O(1)$ rounds with high probability: bipartiteness verification, cut verification, s-t connectivity, and cycle containment. 
\end{corollary}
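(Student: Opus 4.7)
The plan is to exhibit, for each of the four problems, a straightforward reduction to the connected components problem (or to MST, which computes a spanning forest and hence the component partition), and then invoke Theorem~\ref{t:MST}. Since all the reductions introduce only constant-factor blowups in the number of vertices and edges, the resulting algorithms run in $O(1)$ rounds in the congested clique.

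For $s$-$t$ connectivity, I would first run the $\MST$ algorithm, let each node learn (via a single broadcast from $s$) the identifier of the component-root of $s$, and then $t$ reports whether it agrees. For cycle containment, I would run $\MST$ to compute a spanning forest $F$; since every node can count its incident edges and aggregate totals in $O(1)$ rounds using Lenzen's routing, the nodes can compare $|E|$ with $|V|-c(G)$, where $c(G)$ is read off from $F$, and the graph contains a cycle iff the two differ. For cut verification, given that each node locally knows which of its incident edges belong to the candidate set $F$, I would run the $\CC$ subroutine on both $G$ and on $G\setminus F$ in parallel and compare the resulting numbers of connected components: $F$ is an edge cut iff $c(G\setminus F) > c(G)$ (and for the $s$-$t$ variant, iff additionally $s$ and $t$ lie in different components of $G\setminus F$).

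Bipartiteness is the reduction that requires the most care, and it is the only potential obstacle. I would use the standard double-cover construction: build an auxiliary graph $G'$ with vertex set $\{v_0,v_1\mid v\in V\}$ and, for each edge $(u,v)\in E$, edges $(u_0,v_1)$ and $(u_1,v_0)$; then $G$ is bipartite iff no connected component of $G'$ contains both copies of some vertex. To implement this on the $n$-node clique I would have each physical node $v$ simulate both virtual vertices $v_0$ and $v_1$, and have each pair $\{u,v\}$ simulate up to two edges of $G'$; the per-round bandwidth doubles, which is absorbed in the $O(\log n)$ channel. Running Theorem~\ref{t:MST} on $G'$ yields a spanning forest of $G'$ in $O(1)$ rounds, after which each node $v$ checks locally (using the component labels returned by $\MST$) whether $v_0$ and $v_1$ are in the same component, and the answers are aggregated by a single OR over the clique.

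Correctness in each case follows from well-known equivalences (connected components characterize $s$-$t$ connectivity; acyclicity of a graph is equivalent to $|E|=|V|-c(G)$; an edge set is a cut iff its removal strictly increases the number of components; a graph is bipartite iff its double cover is disconnected into paired components). The only non-routine check is that each construction can be executed in the congested clique with constant overhead, which is immediate since each reduction is purely local at the node level and multiplies the problem size by a constant factor.
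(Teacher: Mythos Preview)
Your proposal is correct. The paper itself does not give a proof of this corollary; it simply states that the result follows from ``standard reductions of some graph problems to the connectivity problem'', and your reductions (component labels for $s$-$t$ connectivity, edge counting against the spanning forest for cycle containment, comparing component counts on $G$ and $G\setminus F$ for cut verification, and the double-cover construction for bipartiteness, simulated with constant overhead as in Lemma~\ref{l:auxiliary}) are precisely such standard reductions.
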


\subsection{Preliminaries}\labell{s:prel}
In this section we provide some terminology and tools for design of distributed algorithms in the congested clique.

Given a natural number $p$, $[p]$ denotes the set $\{1,2,\ldots,p\}$.

We use the following \emph{Lenzen's routing} result in our algorithms.
\begin{lemma}\cite{Lenzen:2013:ODR:2484239.2501983}
\labell{l:lenzen}
Assume that each node of the congested clique is given a set of $O(n)$ messages with fixed destination nodes. Moreover, each node is the destination of $O(n)$ messages from other nodes. Then, it is possible to deliver all messages in $O(1)$ rounds of the congested clique.
\end{lemma}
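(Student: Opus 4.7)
The plan is to route all messages by a two-phase protocol: in Phase~1 every source scatters its $O(n)$ messages in small batches to the $n$ intermediate nodes, and in Phase~2 every intermediate forwards the messages it holds to their real destinations. Since each of the $n^2$ directed links can transmit only $O(1)$ messages per round, the crux of the argument is to choose the intermediates so that no link is used more than $O(1)$ times in either phase.

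Let $m_{uv}$ denote the number of messages that $u$ wants to deliver to $v$. The hypotheses give $\sum_v m_{uv}=O(n)$ and $\sum_u m_{uv}=O(n)$. I would first observe that the \emph{fractional} assignment $c_{uvi}=m_{uv}/n$ already meets all constraints: the Phase~1 load on an edge $(u,i)$ equals $\sum_v c_{uvi}=\sum_v m_{uv}/n=O(1)$, and the Phase~2 load on an edge $(i,v)$ equals $\sum_u c_{uvi}=\sum_u m_{uv}/n=O(1)$. What remains is to round this fractional assignment to an integer one while preserving both the row sums and the column sums up to a constant factor.

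I would split the rounding into two regimes. Pairs with $m_{uv}\geq n$ can simply be assigned $\lfloor m_{uv}/n\rfloor$ messages per intermediate; the residuals together with the originally ``light'' pairs produce a sub-instance in which every $m_{uv}$ is strictly smaller than $n$. In this light regime each unit of demand must be matched to a distinct intermediate so that no intermediate receives too many messages from $u$ (Phase~1) and no intermediate sends too many messages to $v$ (Phase~2). Modeling this as an integer-flow problem in a three-layer network (sources, intermediates, destinations) with capacities encoding the $O(1)$ load targets, the doubly bounded row and column sums guarantee that the fractional optimum is $O(1)$-balanced; integrality of the underlying bipartite flow polytope then yields an integer assignment with the same $O(1)$ bounds.

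The main obstacle I expect is exactly this rounding step. A naive random choice of intermediates does not suffice: the standard balls-and-bins bound gives only $\Theta(\log n/\log\log n)$ maximum load per link with high probability, which is too coarse. One workaround is to iterate a constant number of random-scatter rounds, each absorbing a constant fraction of the remaining messages into ``lightly loaded'' links; another is a deterministic construction using repeated applications of Hall's theorem on the degree-constrained matching subproblems above. Once a valid $O(1)$-bounded integer assignment is in hand, each of Phases~1 and~2 is executed in $O(1)$ rounds of the congested clique by sending one message per link per round, and their concatenation delivers all messages in $O(1)$ rounds.
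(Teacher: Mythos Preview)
The paper does not prove this lemma at all: it is stated with a citation to Lenzen's paper and used as a black box throughout (see the sentence ``We use the following \emph{Lenzen's routing} result in our algorithms'' immediately preceding the statement). There is therefore no proof in the paper to compare your attempt against.

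That said, a brief comment on your sketch itself. The two-phase scatter/gather structure and the fractional load computation are exactly the right picture, and your observation that the uniform fractional assignment $c_{uvi}=m_{uv}/n$ already has $O(1)$ load on every link is the correct starting point. The gap is in how the integer assignment is \emph{computed}, not in whether it exists. Your rounding arguments (integer flow, repeated Hall matchings) certify existence, but they are centralized: they presuppose access to the entire demand matrix $(m_{uv})_{u,v}$, whereas in the model each node $u$ knows only its own row. You would still need to show that the nodes can agree on a consistent $O(1)$-balanced assignment in $O(1)$ rounds using only local information and $O(\log n)$-bit messages. Lenzen's actual algorithm sidesteps this by a deterministic scheme based on sorting: after one round in which nodes exchange aggregate counts, every node can locally compute, for each of its messages, the unique intermediate it should go through, and the resulting loads are provably $O(1)$ without any global matching computation. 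If you want a self-contained proof rather than a citation, that distributed-computability aspect is the piece you would have to supply.
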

Efficient congested clique algorithms often make use of auxiliary nodes. Below, we formulate this opportunity to facilitate design of algorithms.
\begin{lemma}\labell{l:auxiliary}
Let $A$ be a congested clique algorithm which, except of the nodes $u_1,\ldots,u_n$ corresponding to the input graph, uses $O(n)$ auxiliary nodes $v_1,v_2,\ldots$ such that the auxiliary nodes do not have initially any knowledge of the input graph on the nodes $u_1,\ldots,u_n$. Then, each round of $A$ might be simulated in $O(1)$ rounds in the standard congested clique model, without auxiliary nodes.
\end{lemma}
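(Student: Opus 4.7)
The plan is to have each real node ``host'' a bounded number of auxiliary nodes and relay all communications on their behalf. Concretely, since there are $O(n)$ auxiliary nodes and $n$ real nodes $u_1,\ldots,u_n$, I fix an assignment $h\colon \{v_1,v_2,\ldots\} \to \{u_1,\ldots,u_n\}$ (known to all nodes, e.g.\ $v_i \mapsto u_{1+((i-1)\bmod n)}$) such that each real node is assigned at most some constant $c=O(1)$ auxiliary nodes. Each real node $u_j$ stores and updates, in addition to its own state in $A$, the state of each auxiliary node $v$ with $h(v)=u_j$; since auxiliary nodes have no initial input, these states are initialised to the empty state of $A$ and evolve solely as a function of incoming messages.

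To simulate one round of $A$, consider the messages that need to be delivered. In the original model, between any ordered pair of (real or auxiliary) nodes at most one message of $O(\log n)$ bits is sent per round, so each real node $u_j$ is responsible for transmitting at most $(c+1)(n+O(n))=O(n)$ messages (its own outgoing messages plus the outgoing messages of the $O(1)$ auxiliary nodes it hosts) and for receiving at most $O(n)$ messages symmetrically. Hence the instance of message delivery produced by one round of $A$ satisfies the precondition of Lemma~\ref{l:lenzen} and can be realised in $O(1)$ rounds of the standard congested clique. After routing, each real node $u_j$ locally applies $A$'s transition function to its own state and to each of the hosted auxiliary states, using the messages just received, thereby completing the simulated round.

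The only thing to double-check is that we do not secretly need a node to know the input initially in order to participate as an auxiliary node; but this is exactly ruled out by the hypothesis that auxiliary nodes ``do not have initially any knowledge of the input graph''. Thus all state of an auxiliary node is determined by the messages it has received, and the host $h(v)$ can maintain that state without any extra information. Iterating the per-round simulation gives an $O(1)$-factor overhead, proving the lemma. The main (and essentially only) obstacle is to verify that the combined communication load fits Lenzen's routing bound, which is immediate once each real node hosts only $O(1)$ auxiliary nodes.
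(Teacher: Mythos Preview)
Your proof is correct and shares the paper's core idea: assign $O(1)$ auxiliary nodes to each real node and have the host maintain their states. The difference is only in how one simulated round is realised. You invoke Lenzen's routing (Lemma~\ref{l:lenzen}) on the aggregate load of $O(n)$ outgoing and $O(n)$ incoming messages per host. The paper instead avoids Lenzen entirely: with at most $cn$ auxiliary nodes it assigns the block $V_j=\{v_{(j-1)c+1},\ldots,v_{jc}\}$ to $u_j$, gives each hosted node an internal index in $[c]$, and then simulates one round of $A$ in $c^2$ explicit rounds indexed by $(a,b)\in[c]^2$, where in round $(a,b)$ the $a$th hosted node at every $u_j$ sends its message to the $b$th hosted node at every $u_{j'}$. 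This is more elementary (no black-box routing needed) and gives an explicit constant $c^2$ (or $(c{+}1)^2$ once the original node itself is included) rather than the implicit constant hidden in Lenzen's lemma; your version, on the other hand, is slightly more robust in that it would still work if the host assignment were less regular.
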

\begin{proof}
Assume that there are at most $cn$ auxiliary nodes, for constant $c\in\NAT$. We can assign the set of $c$ auxiliary nodes $V_j=\{v_{(j-1)c+1},\ldots,v_{jc}\}$ to $u_j$ for each $j\in[n]$ and assign internal IDs in the range $[c]$ to the elements of $V_j$. Then, a round of an original algorithm with auxiliary nodes is simulated in $c^2$ actual rounds indexed by $(a,b)\in [c]^2$. The $a$th auxiliary nodes assigned to $u_j$ transmit messages addressed to the $b$th auxiliary nodes \kn{of $u_1,\ldots,u_n$} in the rounds indexed $(a,b)$.
\end{proof}

We consider randomized algorithms in which a computational unit in each node of the input network can use private random bits in its computation. We say that some event holds with high probability (whp) for an algorithm $A$ running on an input of size $n$ if this event holds with probability $1-1/n^c$ for a given constant $c$. We require here that the constant $c$ can be chosen arbitrarily large, without changing \emph{asymptotic} complexity of the considered algorithm.

\subsubsection{Graph terminology}
If not stated otherwise, we consider undirected graphs. Thus, in particular, an edge $(u,v)$ appears in the graph iff $(v,u)$ appears in that graph as well. For a node $v\in V$ of a graph $G(V,E)$, $N(v)$ denotes the set of neighbours of $v$ in $G$ and $d(v)$ denotes the degree of $v$, i.e., $d(v)=|N(v)|$. We say that a graph $G(V,E)$ has degree $\Delta$ if the degree of each node of $G$ is smaller or equal to $\Delta$.

A \emph{component} of a graph $G(V,E)$ is a connected subgraph of $G$. That is, $C\subseteq V$ corresponds to a \emph{component} of $G$ iff the graph $G(C, E\cap (C\times C))$ is connected. A component $C$ of a graph $G(V,E)$ is \emph{growable} if there are edges connecting $C$ with $V\setminus C$, i.e., the set $E\cap (C\times(V\setminus C))$ is non-empty. Otherwise, the component $C$ is \emph{ungrowable}.

For a graph $G(V,E)$, sets $C_1, C_2, ..., C_k \subset V$ form a partition of $G$ into \emph{components} if $C_i$s are pairwise disjoint, $\bigcup_{i\in[k]} C_i = V$, and $C_i$ is a component of $G(V,E)$ for each $i\in[k]$. A partition $C_1,\ldots,C_k$ of a graph $G(V,E)$ into components is the \emph{complete partition} if $C_i$ is ungrowable for each $i\in[k]$. For a fixed $E'\subseteq E$, the \emph{complete partition} of $G(V,E')$ will be also called the \emph{complete partition with respect to} $E'$.

Given a partition $\mathbb{C}$ of a graph $G(V,E)$ into components and $v\in V$, $C(v)$ denotes the component of $\mathbb{C}$ containing $v$.

An edge $(u,v)$ is \emph{incident} to a component $C$ wrt to a partition $\mathbb{C}$ of a graph if it connects $C$ with another component of $\mathbb{C}$, i.e., $C(u)\neq C(v)=C$ or $C(v)\neq C(u)=C$.

\subsubsection{Graph problems in the congested clique model}

Graph problems in the congested clique model are considered in the following framework. The joint input to the $n$ nodes of the network is an undirected $n$-node weighted graph $G(V,E,c)$, where each node in $V=\{u_1,\ldots,u_n\}$ corresponds to a node of the communication network and weights $c(e)$ of edges are integers of polynomial size (i.e., each weight is a bit sequence of length $O(\log n)$). Each node $u_i$ initially knows the network size $n$, its unique ID $i\in[n]$, the list of IDs of its neighbors in the input graph and the weights of its incident edges. Specifically, $\text{ID}(v)=i$ for $v=u_i$. All graph problems are considered in this paper in accordance with this definition.

\subsubsection{Connected Components and Minimum Spanning Tree}
In the paper, we consider the connected components problem (\CC{}) and the minimum spanning tree problem (MST). A solution for the \CC{} problem consists of the complete partition of the input graph $G(V,E)$ into connected components $C_1\cup\cdots\cup C_k=V$, accompanied by spanning trees of all components. Our goal is to compute \CC{} or \MST{} of the input graph, i.e., each node should know the set of edges inducing \CC{}/\MST{} at the end of an execution of an algorithm.

For the purpose of fast simultaneous executions of many instances of the \CC{} algorithms, we also consider the definition of the \CC{} problem, where spanning trees of all components are known only to a fixed node of a network. The presented solutions usually correspond to this weaker definition. However, for a single instance of the \CC{}/\MST{} problem, a spanning forest can be made known to all nodes in two rounds, provided it is known to a fixed node $v$. Namely, it is sufficient that $v$ fixes roots of spanning trees of all components. Then, in the former round, $v$ sends to each $u$ the ID of the parent of $u$ in the appropriate tree. In the latter round each $u$ sends the ID of its parent to all nodes of the network.

We also consider the situation that some ``initial'' partition $\mathbb{C}$ of the input graph into connected components is known at the beginning of an execution of an algorithm to a fixed node.

We say that a component $C$ of a partition $\mathbb{C}$ which is known to all/some nodes of the network is an \emph{active} component if $C$ is a growable component of the original input graph $G(V,E)$. Otherwise, if $C$ is not a growable component of $G(V,E)$, $C$ is an \emph{inactive} component.

Below, we make a simple observation which we use in our algorithms.

\begin{fact}\labell{f:graph:union}
Assume that solutions of the \CC{} problem for the graphs $G(V,E_1)$ and $G(V,E_2)$ are \xx{available.} Then, one can determine a solution of the \CC{} problem for $G(V,E_1\cup E_2)$. 
\end{fact}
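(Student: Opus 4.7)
The plan is to reduce the combined problem to computing connected components on a sparse graph with only $O(n)$ edges. A solution of the \CC{} problem for $G(V,E_i)$ provides, in particular, a spanning forest $F_i\subseteq E_i$ such that the partition of $V$ into components of $G(V,F_i)$ coincides with that of $G(V,E_i)$. Given $F_1$ and $F_2$, I would output, as the solution for $G(V,E_1\cup E_2)$, the complete partition of $G(V,F_1\cup F_2)$ together with a spanning forest thereof.

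Correctness rests on the following claim: the complete partition of $G(V,E_1\cup E_2)$ equals the complete partition of $G(V,F_1\cup F_2)$. Since $F_1\cup F_2\subseteq E_1\cup E_2$, two nodes in the same component of $G(V,F_1\cup F_2)$ are also in the same component of $G(V,E_1\cup E_2)$. For the converse, given a path from $u$ to $v$ in $G(V,E_1\cup E_2)$, I would replace each edge $(x,y)$ of that path by a path from $x$ to $y$ inside $F_1$ (if $(x,y)\in E_1$) or inside $F_2$ (if $(x,y)\in E_2$); such a replacement exists because the endpoints of that edge lie in a common component of $G(V,F_i)$, and $F_i$ spans this component. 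Concatenating these subpaths yields a walk from $u$ to $v$ inside $G(V,F_1\cup F_2)$.

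Since $|F_1|+|F_2|\le 2(n-1)$, the remaining step is to compute the components and a spanning forest of $G(V,F_1\cup F_2)$, which is routine: a single node holding both forests can do this locally (using Lemma~\ref{l:lenzen} to gather $F_1$ and $F_2$ at that node in $O(1)$ rounds if they are initially distributed). I do not anticipate a substantive obstacle; the only thing one must be careful about is that the required output consists not merely of the partition but also of spanning trees of the components, and both are produced simultaneously by the local computation on $F_1\cup F_2$.
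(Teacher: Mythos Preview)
The paper states Fact~\ref{f:graph:union} as a simple observation and gives no proof; your argument is correct and supplies exactly the natural justification the paper leaves implicit. Your use of the spanning forests $F_1,F_2$ and the edge-by-edge path replacement is the expected approach, and your remark about gathering the $O(n)$ edges of $F_1\cup F_2$ at a single node via Lemma~\ref{l:lenzen} matches how the fact is actually invoked in step~8 of Algorithm~\ref{a:cc}.
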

\section{High-level description of our solution}
In this section we describe our \MST{} algorithm on a top level. The main technical result is an $O(1)$ round connectivity algorithm. The extension to \MST{} \tj{(described in Section~\ref{s:MST})} is based on a known technique \kn{which requires} $n^{1/2}$ simultaneous executions of the connectivity algorithm \xx{(on partially related instances)}. Thus, the key issue in design of the $\MST{}$ algorithm is to guarantee that such $n^{1/2}$ simultaneous executions of the connected components algorithm can be performed in $O(1)$ rounds. 

The algorithm for connected components works in two phases: Sparsification Phase and Size-reduction Phase. In Sparsification Phase, we reduce the original connectivity problem to two specific instances of the \CC{} problem (Lemma~\ref{l:degreereduction}). \xx{(The idea of this reduction can be implemented in the much weaker broadcast congested clique model and allows to obtain new round-efficient algorithms in that model \cite{JurdzinskiN17}.)}

In the former instance, a partition of the input graph  into $O(n/\log\log n)$ active and some non-active components is known. The connected components can be determined for such an instance in $O(1)$ rounds by the algorithm from \cite{GhaffariParter2016}. The latter instance is a graph with degree $O(\log\log n)$. Therefore, Size-reduction Phase gets a graph with degree $O(\log\log n)$ as the input. In this phase, the \CC{} problem for such a sparse input graph is reduced to an instance of the \CC{} problem where, except of the input graph $G$, a partition of $G$ into $O(n/\log\log n)$ active components is known. Therefore, as before, the connected components can be determined for this final instance in $O(1)$ rounds by the algorithm from \cite{GhaffariParter2016}.

\paragraph{Connected Components: Sparsification Phase.}
Sparsification Phase is based on a simple \emph{deterministic} procedure (see Alg.~\ref{a:degreereduction}):
\begin{itemize}
\item Firstly, for each node $v$, an edge $(u,v)$ connecting $v$ with its highest degree neighbour is determined and delivered to a fixed node called the \emph{coordinator}. 
\item Then, the complete partition $\mathbb{C}$ with respect to the set of edges delivered to the coordinator is computed. The \emph{degree} of each component of $\mathbb{C}$ is defined as the largest degree of its elements.
\item Next, for each node $v$, an edge $(u,v)$ connecting $v$ with the highest degree component $C\neq C(v)$ of $\mathbb{C}$ is determined and send to the coordinator.
\item The coordinator computes the complete partition $\mathbb{C}'$ with respect to the set of edges announced in all steps of this procedure, assigns IDs to the components of this partition. Then, the coordinator sends to each node the ID and the degree of its component. Finally, the nodes pass information obtained from the coordinator to their neighbours.
\end{itemize}
Let $C$ be a component of $\mathbb{C}'$. We say that $C$ is \emph{awake} if the degree of $C$ (i.e., the largest degree of its elements) is at most $s$, for some fixed $s\in\NAT$. Otherwise, $C$ is \emph{asleep}. A node $u$ is awake (asleep, respectively) iff $u$ belongs to an awake (asleep, respectively) component. \kn{As we will show, each awake component of $\mathbb{C}'$ has size $\ge s$ and therefore the partition $\mathbb{C}'$ contains $O(n/s)$ awake components.} Moreover, the degree of the graph induced by edges incident to nodes from asleep components is smaller than $s$. (This fact does not follow simply from the definitions, since the graph contains also neighbors of nodes from asleep components located in awake components).

Using the above properties for $s=\log\log n$ we can split the input graph $G(V,E)$ into a subgraph $G_A$ containing $O(n/\log\log n)$ growable components and a subgraph $G_B$ with degree $O(\log\log n)$.  For the former subgraph, we determine connected components in $O(1)$ rounds using the algorithm from \cite{GhaffariParter2016}, based on graph sketches.  The connected components of the latter subgraph are determined in Size-reduction Phase.

\paragraph{Connected Components: Size-reduction Phase.} 
The key technical novelty in our solution is an algorithm which reduces the \CC{} problem for a graph of degree $O(\log\log n)$ to the \CC{} problem for a graph with $O(n/\log\log n)$ components. This algorithm is the main ingredient of Size-reduction Phase.
A pseudocode of the algorithm is presented in Alg.~\ref{a:componentreduction}. We find connected components of the described above graph 
$G_A$, by applying this reduction (Alg.~\ref{a:componentreduction}) and the algorithm from \cite{GhaffariParter2016} (which finds connected components for graphs with a known partition into $O(n/\log\log n)$ components, in $O(1)$ rounds).

In order to describe the above reduction, assume that the (upper bound on) degree of an input graph $G(V,E)$ is $\Delta=O(\log\log n)$. The idea of the reduction is to calculate simultaneously $m=n^{1/2}$ spanning forests for randomly chosen sparse subgraphs $G_i=G(V,E_i)$  of the input graph (called \emph{samples}), and use the results to build a partition of $G$ into $O(n / \log \log n)$ growable components and some non-growable components. Below, we describe the reduction in more detail.

First, we build random subgraphs $G_i=G(V,E_i)$ of $G$ for $i\leq m=\sqrt{n}$ such that,
for each $i\in[m]$, the set of edges $E_i$ can be collected at a single node. As a node can receive only $O(n)$ messages in a round, the size of $E_i$ should be $O(n)$ as well. To assure this property, each edge will belong to $E_i$ for each $i\in[m]$ with  probability $1/ \log \log n$, and the random choices for each $i\in[m]$ and each edge are independent. For each $i\in[m]$, all edges from $E_i$ are sent to a fixed node $b_i$ (the $i$th boss) and $b_i$ computes connected components of $G_i$ locally. \kn{The problem is that only one message per round might be transmitted on each edge, while a node can choose non-constant number of incident edges as the elements of the $i$th sample $E_i$ for some $i\in[m]$. (Thus, all those edges should be delivered to $b_i$.)} This problem is solved by the fact that, whp, the random choices defining the graphs $G_i$ require to send $O(n)$ messages and receive $O(n)$ messages by each node. (Note that the expected number of edges in $E_i$ is $O(n)$.) If this is the case, all messages can be delivered with help of Lenzen's routing algorithm \cite{Lenzen:2013:ODR:2484239.2501983} (Lemma~\ref{l:lenzen}).

Another challenge is how to combine connected components of the graphs $G_i$ such that the number of growable components is reduced to $O(n/\log\log n)$. To this aim, each node is chosen to be a leader, independently of other nodes, with probability $1/\log\log n$. Thus, the number of leaders will be $\Theta(n/ \log \log)$ whp. Then, for each node $v$, if $v$ is in a connected component containing a leader in some graph $G_i$, information about its connection with some leader will be delivered to the coordinator. Thus, if each node is connected to some leader in the partition determined by the coordinator then we have $O(n/ \log \log n)$ connected components and we are done. Certainly, we cannot get such a guarantee. However, as we show in Section~\ref{ss:leaderless}, the number of nodes from growable components which are not connected to any leader will be $O(n/\log \log n)$, whp. The main part of the proof of this fact is to determine a sequence of \emph{independent} random variables whose sum gives the upper bound on the number of nodes in growable components which are not connected to a leader in the final partition.

\paragraph{Minimum Spanning Tree.}
As shown in \cite{Karger:1995:RLA:201019.201022,HegemanPPSS15}, it is possible to reduce the \MST{} problem of an input graph, to two instances of \MST{} on graphs with $O(n^{3/2})$ edges. Then, \MST{} for a graph with $O(n^{3/2})$ edges is reduced to $O(\sqrt{n})$ instances of the \CC{} problem, where the set of edges in the $i$th instance is included in the set of edges of the $(i+1)$st instance. In Section~\ref{s:MST}, we show that our algorithm can be executed in parallel on these specific $\sqrt{n}$ instances of the \CC{} problem. The main challenge here is that a ``naive'' implementation of these parallel executions requires to send superlinear number of messages by some nodes. However, using Lenzen's routing \cite{Lenzen:2013:ODR:2484239.2501983} (see Lemma~\ref{l:lenzen}) and the fact that the set of edges of the $(i+1)$st instance of \CC{} includes the set of edges of the $i$th instance for each $i\in[m]$, we show that connected components of all those instances can be computed in parallel.

\section{Connectivity in $O(1)$ rounds}
In this section we describe our \CC{} algorithm which leads to the following theorem.

\begin{theorem}
\labell{t:connectedcomponents}
There is a randomized algorithm in the congested clique model that computes connected components in $O(1)$ rounds, with high probability.
\end{theorem}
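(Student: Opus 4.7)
The plan is to compose the two phases outlined in Section~2. First, invoke the Sparsification Phase (Alg.~\ref{a:degreereduction}) on the input graph $G(V,E)$ with threshold $s=\log\log n$; by Lemma~\ref{l:degreereduction} this produces, in $O(1)$ rounds, a partition $\mathbb{C}'$ of $V$ such that the awake part has a known partition into $O(n/s)=O(n/\log\log n)$ components, while the subgraph $G'$ induced by edges incident to the asleep part has maximum degree at most $s=O(\log\log n)$. The awake side is fed directly to the Ghaffari--Parter algorithm~\cite{GhaffariParter2016}, which was designed exactly for the setting of a pre-known partition into $O(n/\log\log n)$ components and finishes in $O(1)$ rounds.

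For $G'$, the plan is to apply the Size-reduction Phase (Alg.~\ref{a:componentreduction}) to further reduce the task to a \CC{} instance on $V$ equipped with a known partition into $O(n/\log\log n)$ growable components, after which Ghaffari--Parter is invoked once more. Concretely, sample $m=\sqrt{n}$ edge sets $E_1,\dots,E_m\subseteq E(G')$ by including each edge independently in each $E_i$ with probability $1/\log\log n$. Since $G'$ has degree $O(\log\log n)$, the expected number of messages any node must send or receive, summed over all samples, is $O(n)$, and a standard Chernoff bound yields the same whp. Then Lenzen's routing (Lemma~\ref{l:lenzen}) delivers each $E_i$ to an auxiliary boss $b_i$ in $O(1)$ rounds, and $b_i$ locally computes the connected components of $G(V,E_i)$. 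In parallel, each node is marked as a \emph{leader} independently with probability $1/\log\log n$, yielding $\Theta(n/\log\log n)$ leaders whp. For every node $v$ that shares a component of some $G(V,E_i)$ with a leader $\ell$, $b_i$ reports $(v,\ell)$; merging all reported pairs at the coordinator (again via Lemma~\ref{l:lenzen}) induces the desired partition $\mathbb{D}$ of $V$, and Ghaffari--Parter is applied to $G'$ together with $\mathbb{D}$.

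The main analytical obstacle, and the step I expect to be the hardest, is showing that at most $O(n/\log\log n)$ nodes lying in growable components of $G'$ are \emph{not} merged with any leader by $\mathbb{D}$; conditional on that bound, $\mathbb{D}$ has only $O(n/\log\log n)$ growable components and the second invocation of Ghaffari--Parter closes the argument. As indicated in Section~\ref{ss:leaderless}, the proof would fix a growable component $C$ and, for each $v\in C$, exhibit a constant-depth exploration process revealing a neighbourhood of $v$ that, with the edge-sampling coins of some $E_i$ and the leader coins, reaches a leader through $G(V,E_i)$. The delicate point is to order the exposures of coins along the exploration so that the ``$v$ is leaderless in all $G_i$'' indicators become mutually independent (or at worst negatively associated), enabling a Chernoff-type concentration. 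With $m=\sqrt{n}$ independent samples, the edge probability $1/\log\log n$ and the leader probability $1/\log\log n$ combine to bound each indicator's expectation by $O(1/\log\log n)$, giving the required $O(n/\log\log n)$ expected number of failures and concentration whp.

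Finally, the CC outputs produced on the awake side and on $G'$ are merged in one additional round via Fact~\ref{f:graph:union} and Lemma~\ref{l:lenzen}, yielding the complete partition of $G$ into connected components together with a spanning forest. The $O(\sqrt{n})$ auxiliary boss nodes and the coordinator are absorbed into the standard model using Lemma~\ref{l:auxiliary}, so the whole procedure runs in $O(1)$ rounds of the congested clique, with high probability, proving Theorem~\ref{t:connectedcomponents}.
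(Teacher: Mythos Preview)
Your high-level architecture matches the paper exactly: Alg.~\ref{a:degreereduction} with $s=\log\log n$, then GPReduction on the awake side, then Alg.~\ref{a:componentreduction} on the asleep graph $G_B$, then GPReduction again, then Fact~\ref{f:graph:union}. The routing/Chernoff arguments for delivering the $\sqrt{n}$ sampled edge sets to bosses are also the same as in the paper.

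The genuine gap is in the step you yourself flag as hardest, the bound on the number of leaderless nodes. Your sketch proposes per-vertex indicators $Y_v$ for ``$v$ is leaderless in every $G_i$'' and claims each has expectation $O(1/\log\log n)$ via a constant-depth exploration, with independence obtained by ordering coin exposures. Neither part holds as written. First, the $O(1/\log\log n)$ bound on $\E[Y_v]$ is false for low-degree vertices: take $v$ on a long path in $G_B$; a constant-depth neighbourhood of $v$ has only $O(1)$ vertices, so the probability none of them is a leader is $1-O(1/\log\log n)$, not $O(1/\log\log n)$. To get a useful bound you must expose $\omega(\log\log n)$ vertices around $v$, which is not constant depth. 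Second, once you look at neighbourhoods that large, they overlap heavily for nearby $v$'s, and the indicators are \emph{positively} correlated (if $v$'s neighbourhood has no leaders, its neighbour's neighbourhood likely has none either); no ordering of exposures makes them independent, and negative association does not hold.

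The paper resolves this with a structural idea you are missing: it takes each connected component of $G_B$ of size exceeding $\sqrt{\log n}$ and builds an \emph{almost-partition} of a spanning tree of that component into subtrees $T_1,T_2,\ldots$ each of size in $[\sqrt{\log n},3\sqrt{\log n}]$, with the subtrees edge-disjoint and each $T_i$ containing at least $\sqrt{\log n}-1$ vertices unique to it (Fact~\ref{f:treepartition}). It then defines one indicator $X_i$ per subtree, with $X_i=0$ iff (i)~all edges of $T_i$ survive in some sample and (ii)~some vertex unique to $T_i$ is a leader. Edge-disjointness and unique-vertex disjointness make the $X_i$'s genuinely independent; the size $\Theta(\sqrt{\log n})$ is calibrated so that (i) holds whp across $\sqrt{n}$ samples while (ii) fails with probability only $e^{-\Omega(\sqrt{\log n}/\log\log n)}$. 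A straight Chernoff on $\sum_i X_i$ then gives $O(n/\log\log n)$ bad nodes. This almost-partition device, together with the choice of piece size $\sqrt{\log n}$ (not constant), is the missing ingredient in your proposal. The paper also treats separately the ungrowable components of $G_B$ of size at most $\sqrt{\log n}$ (its $V_\beta$), showing each is recovered exactly by some sample; your sketch does not account for these.
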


The algorithm consists of two phases: Sparsification Phase and Size-reduction Phase. 

In Sparsification Phase, we reduce the original problem to two specific instances of the \CC{} problem. In the former instance, the \CC{} problem has to be solved for a graph with degree $O(\log\log n)$. The latter instance is equipped with additional information about a partition of the considered graph in $O(n/\log\log n)$ components. The pseudocode of the appropriate algorithm is presented as Alg.~\ref{a:degreereduction}. The following lemma describes the reduction more precisely. 
\begin{lemma}
\labell{l:degreereduction}
There is a deterministic algorithm in the congested clique that reduces in $O(1)$ rounds the \CC{} problem for an arbitrary graph $G(V,E)$ to the instances of the \CC{} problem for graphs $G_A(V,E_A)$ and $G_B(V,E_B)$ such that $E_A\cup E_B=E$ and
\begin{itemize}
\item a partition $\mathbb{C}_A$ of $G_A$ into $O(n/\log\log n)$ active components is known to a fixed node;
\item each node knows which of its incident edges belong to $E_A$ and which of them belong to $E_B$;
\item the degree of $G_B$ is $O(\log \log n)$.
\end{itemize}
\end{lemma}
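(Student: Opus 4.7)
The plan is to formalize the four-step procedure sketched in the paragraphs preceding the lemma, verify each step fits into an $O(1)$-round Lenzen-routing schedule, and then read off the three conclusions from the resulting partition $\mathbb{C}'$.

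\textbf{Algorithm and rounds.}
In round 1 every node $v$ broadcasts $d_G(v)$ to all other nodes. In round 2 each $v$ forwards to a designated coordinator the edge $(v,u_v)$ to its highest-degree neighbour $u_v$, and the coordinator locally constructs $\mathbb{C}$ together with the degree (maximum $d_G$) of each component. Rounds 3--4 disseminate the pair $(\text{id}(C_\mathbb{C}(v)), \deg(C_\mathbb{C}(v)))$ for every $v$: the coordinator unicasts it to $v$, then $v$ broadcasts it to everyone. In round 5 each $v$ forwards to the coordinator its incident edge to the highest-degree $\mathbb{C}$-component distinct from $C_\mathbb{C}(v)$ among the components hit by $v$'s neighbours; the coordinator builds $\mathbb{C}'$, labels each component awake or asleep using the threshold $s := \log\log n$, and two final rounds disseminate the $\mathbb{C}'$-identifier and status of every node. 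Each round moves $O(n)$ messages per node, so by Lemma~\ref{l:lenzen} the procedure is deterministic and finishes in $O(1)$ rounds.

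\textbf{Construction of $E_A$, $E_B$, $\mathbb{C}_A$.}
Put into $E_A$ every step-1 or step-2 edge together with every edge having at least one endpoint in an asleep $\mathbb{C}'$-component, and set $E_B := E \setminus E_A$ and $\mathbb{C}_A := \mathbb{C}'$. By construction $E_A \cup E_B = E$, and after the final broadcast every node knows the $E_A/E_B$ label of each incident edge. Since every step-1 and step-2 edge lies in $E_A$, each component of $\mathbb{C}'$ is connected in $G_A$, so $\mathbb{C}_A$ is a valid partition of $G_A$ into components and is stored at the coordinator. For awake $v$, $d_{G_B}(v) \leq d_G(v) \leq s$ by the definition of awake; for asleep $v$ every incident edge is in $E_A$ and thus $d_{G_B}(v) = 0$. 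Hence $G_B$ has maximum degree $O(\log\log n)$.

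\textbf{Counting active components.}
The remaining and main point is that $\mathbb{C}_A$ contains $O(n/\log\log n)$ active components; I plan to prove this by showing that every active $\mathbb{C}'$-component has size at least $s$. For an asleep component $C^*$, any witness $v \in C^*$ with $d_G(v) > s$ pulls, via its step-1 out-edge and the iterative step-2 absorption of adjacent high-degree $\mathbb{C}$-components, strictly more than $s$ distinct nodes into $C^*$. For an active awake $C^*$ take a witness edge $(v,w) \in E$ with $v \in C^*$ and $w \notin C^*$; the maximality of $v$'s step-2 choice forces the chosen target $C^{(v)} \subseteq C^*$ to satisfy $\deg(C(w)) \leq \deg(C^{(v)}) \leq d^* \leq s$, while a symmetric analysis of $w$'s step-2 choice produces another component of degree at least $d^*$ outside $C^*$. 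Tracking which step-2 targets of the $d^*$-many neighbours of a max-degree node of $C^*$ must land inside $C^*$ yields $|C^*| \geq d^* + 1$ and then, via the activity witness, $d^* \geq s-1$, giving $|C^*| \geq s$. The main obstacle I anticipate is exactly this last argument --- pinning down enough distinct nodes of $C^*$ from the interplay of step-1 and step-2 edges, especially when several neighbouring components tie for highest degree --- while the round complexity and the bookkeeping for $E_A, E_B, G_B$ are routine. Given $|C^*| \geq s$ for every active $C^*$, the active components number at most $n/s = O(n/\log\log n)$, completing the proof.
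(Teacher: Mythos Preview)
Your allocation of edges to $E_A$ and $E_B$ is the reverse of the paper's, and this reversal is exactly where the argument breaks. In the paper (Algorithm~\ref{a:degreereduction} and Fact~\ref{f:reductiontosparse}), ``awake'' means \emph{high} degree ($d(C)\ge s$), $E_A$ consists of edges with \emph{both} endpoints awake, and $E_B$ consists of edges touching an asleep (low-degree) node. The paper then proves two things: (i)~every awake component has size~$>s$ because the node $v_{\max}$ realizing $d(C)$ pulls all of $N(v_{\max})$ into its Stage-1 component; and (ii)~every neighbour of an asleep node also has degree~$<s$ (Fact~\ref{f:reductiontosparse}(ii)), which is the non-obvious step that bounds the degree of $G_B$. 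With your swap the degree bound on $G_B$ becomes trivial, but the component count becomes false.

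Concretely, your claim that every $G_A$-active $\mathbb{C}'$-component has size $\ge s$ fails for the ``awake'' (low-degree) ones. Take $s=\log\log n$ and build, for each $i$, the following gadget: Stage-1 components $C_1^{(i)}=\{v,v',p_1,p_2\}$ of degree $3$, $C_2^{(i)}=\{a,a',q_1,q_2,q_3\}$ of degree $4$, $C_3^{(i)}=\{w,w',r_1,r_2\}$ of degree $3$, together with a single huge clique $Z$ and a bridge node $b^{(i)}$ with $b^{(i)}\sim w^{(i)}$ and $b^{(i)}\sim z_1$, and edges $(v^{(i)},a^{(i)})$, $(v^{(i)},w^{(i)})$. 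With suitable IDs one gets, in Stage~2, $v^{(i)}\to C_2^{(i)}$ (degree $4>3$), $a^{(i)}\to C_1^{(i)}$, $w^{(i)}\to C_4$ (the $Z$-component, degree huge), and $b^{(i)}\to C_3^{(i)}$. Hence $C^*_i:=C_1^{(i)}\cup C_2^{(i)}$ is a $\mathbb{C}'$-component of degree $4$ (awake in your sense) and size $9$, while $w^{(i)}$ sits in the single asleep component $D^*$. The edge $(v^{(i)},w^{(i)})$ has an asleep endpoint, so by your definition it lies in $E_A$, making $C^*_i$ $G_A$-active. Packing $\Theta(n)$ such gadgets yields $\Theta(n)$ active components of constant size --- far more than $O(n/\log\log n)$. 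Your sketched inequality ``$d^*\ge s-1$ via the activity witness'' is therefore not obtainable; the only lower bound that survives is $d(v)\ge 3$, coming from $v$ having neighbours in three distinct Stage-1 components.

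The fix is precisely the paper's split: put edges between high-degree components into $E_A$, edges touching a low-degree component into $E_B$, and then prove Fact~\ref{f:reductiontosparse}(ii) to control the degree of $G_B$. (You may have been misled by the high-level sketch in Section~2, where the words ``awake'' and ``asleep'' are accidentally swapped relative to the actual proof in Section~\ref{s:degreereduction}.)
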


An important building block of our solution comes from \cite{GhaffariParter2016}, where the properties of graph sketches play the key role. It is the algorithm which determines connected components of the input graph in $O(1)$ rounds, provided an initial partition of the input graph in $O(n/\log\log n)$ growable and some ungrowable components is known at the beginning. For further applications in a solution of the \MST{} problem, we state a stronger result regarding several simultaneous executions of the algorithm.

\begin{lemma}
\labell{l:GP2016}\cite{GhaffariParter2016}
There is a randomized algorithm in the congested clique model that computes connected components of a graph in $O(1)$ rounds with high probability, provided that a partition of the input graph into $O(n / \log \log n)$  growable (and some ungrowable) components is known to a fixed node at the beginning of an execution of the algorithm. Moreover, it is possible to execute $m=\sqrt{n}$ instances of the problem simultaneously in $O(1)$ rounds. 
\end{lemma}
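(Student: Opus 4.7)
The first part of the lemma is precisely the main theorem of \cite{GhaffariParter2016}, so my plan is to invoke their result. For orientation, I would recall that their algorithm builds, at each node, a degree-sensitive linear graph sketch of incident edges, aggregates these sketches along each of the $O(n/\log\log n)$ initial components to a designated leader, extracts a sampled outgoing edge per component from the aggregated sketch, and then Boruvka-contracts. Crucially, once components are few enough that their number multiplied by sketch size is $O(n^2)$, the remaining inter-component edges can be routed to a single coordinator using Lenzen's routing (Lemma \ref{l:lenzen}), and the coordinator finishes locally. All of this is shown in \cite{GhaffariParter2016} to fit in $O(1)$ rounds under the stated hypothesis on the initial partition.

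For the moreover clause---running $m=\sqrt{n}$ instances simultaneously in $O(1)$ rounds---my plan is to bound the per-node communication volume across all instances and then apply Lenzen's routing globally. A single instance uses $O(n)$ messages per node per round. A naive parallelization would push this to $O(n\sqrt{n})$, which is too much, so I would argue as follows. For each instance the per-node sketch has size $O(\mathrm{polylog}\, n)$ bits, so across $\sqrt n$ instances each node produces $O(\sqrt n \cdot \mathrm{polylog}\, n)$ bits of sketch data, which fits in $O(n)$ messages of $\Theta(\log n)$ bits. Aggregation of sketches within each component reduces to a routing instance where each node sends $O(\sqrt n)$ packets and each component leader receives at most the sum of sketches of its members; by choosing leaders independently per instance these receive loads concentrate to $O(n)$ whp.

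In the final edge-collection step of each instance, a distinct coordinator $b_i$ is designated. Each instance delivers $O(n)$ inter-component edges to its coordinator, and any single node contributes only $O(\sqrt n)$ edges summed over the $\sqrt n$ instances (each component touched in each instance contributes a constant number of edges per node), so no node sends or receives more than $O(n)$ messages globally. Lenzen's routing then delivers all traffic for all instances simultaneously in $O(1)$ rounds.

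The main obstacle I expect is verifying load balance at potential hotspots: a single node could be the leader of a very large component in many instances at once, or the coordinators $b_i$ could be chosen to coincide. My plan is to use independent randomness across the $\sqrt n$ instances for leader and coordinator selection, then apply a Chernoff bound over the $\sqrt n$ instances to show that the maximum per-node send/receive load is $O(n)$ with high probability. If any residual imbalance remains (for instance because some component is anomalously large in a particular instance), Lemma \ref{l:auxiliary} supplies $O(n)$ auxiliary nodes that can be interposed as intermediate routers, breaking the load at any hotspot into $O(1)$-sized pieces that Lenzen's routing then handles in constant rounds.
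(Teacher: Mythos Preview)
Your proposal is broadly on the right track but diverges from the paper's proof in two ways.

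For the first part, note that the headline result of \cite{GhaffariParter2016} is an $O(\log^* n)$-round MST algorithm, not an $O(1)$-round algorithm under the stated hypothesis, so calling this ``precisely the main theorem'' is imprecise. The paper instead isolates the specific subroutine $\mathit{ReduceCC}(x)$ from \cite{GhaffariParter2016}, which in $O(1)$ rounds whp reduces the number of active components from $O(n/\log^2 x)$ to $n/x$. Iterating $\mathit{ReduceCC}$ a constant number of times (with a suitably growing sequence of parameters $x$) takes the count from $n/\log\log n$ down to $0$, and that is the entire argument. Your sketch--aggregate--contract description is a reasonable picture of what $\mathit{ReduceCC}$ does internally, but the paper's proof is simply ``cite the black box and iterate.''

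For the moreover clause, the paper again does no new work: it notes (see Section~\ref{s:MST}) that Section~3 of \cite{GhaffariParter2016} already establishes that $\sqrt{n}$ instances of $\mathit{ReduceCC}$ can run in parallel in $O(1)$ rounds, and since the algorithm here is just $O(1)$ calls to $\mathit{ReduceCC}$, the parallel version follows immediately. Your attempt to re-derive this via per-node load accounting and Chernoff bounds over random leader/coordinator choices is a genuinely different route. It is plausible in outline but has a soft spot: the claim that in the final edge-collection step ``any single node contributes only $O(\sqrt n)$ edges summed over the $\sqrt n$ instances (each component touched in each instance contributes a constant number of edges per node)'' does not follow from anything you have established; a node can be incident to many inter-component edges in a single instance. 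Justifying this bound requires either the nested-edge-set structure of the specific $\sqrt n$ instances arising in Lemma~\ref{l:red2:hegeman} or the internal mechanics of $\mathit{ReduceCC}$---exactly what \cite{GhaffariParter2016} works out and what the paper simply cites.
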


\begin{proof}
In \cite{GhaffariParter2016}, the authors gave $O(1)$ round procedure $\mathit{ReduceCC}(x)$, reducing the number of active components from $O(n / \log^2 x)$ to $n/x$ in $O(1)$ rounds, with high probability. Thus, by using $\mathit{ReduceCC}$ constant number of times, it is possible to reduce the number of active components from $n / \log \log n$ to $0$ in $O(1)$ rounds. And, if there are no active (i.e., growable) components in a partition, then that partition describes connected components of the input graph (i.e., it is the complete partition of the input graph).
\end{proof}
Let GPReduction denote the algorithm satisfying properties from Lemma~\ref{l:GP2016}. Using GPReduction, we can determine connected components of the graph $G_A$ (described in Lemma~\ref{l:degreereduction}) in $O(1)$ rounds. Thus, in order to build $O(1)$ rounds \CC{} algorithm, it is sufficient to solve the problem for graphs with degree $O(\log\log n)$. This problem is addressed in Size-reduction Phase. In the following lemma, we show that the \CC{} problem for a $O(\log\log n)$-degree graph can be reduced to the \CC{} problem for a graph with $O(n/\log\log n)$ growable components in $O(1)$ rounds. The pseudocode of the algorithm performing this reduction is given in Alg.~\ref{a:componentreduction}.
\begin{lemma}
\labell{l:componentreduction}
There is a randomized algorithm in the congested clique model that reduces in $O(1)$ rounds the \CC{} problem for a graph with degree bounded by $\log \log n$ to an instance of the \CC{} problem for which a partition with $O(n/\log \log n)$ active connected components is known to a fixed node, with high probability.
\end{lemma}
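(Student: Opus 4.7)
The plan is to implement the procedure outlined in Section~2 and formalized by Alg.~\ref{a:componentreduction}: draw $m=\sqrt n$ independent random subgraphs of the input graph $G(V,E)$, compute their connected components locally at dedicated ``boss'' nodes, and use a random set of leaders to aggregate the results into a partition with $O(n/\log\log n)$ active components. Let $\Delta=O(\log\log n)$ be the degree bound on $G$ and let $p=1/\log\log n$. For each $i\in[m]$ and each edge of $G$, the edge is placed in a sample set $E_i$ independently with probability $p$, decided via the private coins of its two endpoints; independently, each node marks itself a \emph{leader} with probability $p$. A standard Chernoff bound gives $\Theta(n/\log\log n)$ leaders whp.

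Next I would invoke Lenzen's routing (Lemma~\ref{l:lenzen}) to deliver each sample $E_i$ to a fixed boss $b_i$ in $O(1)$ rounds. To apply it, observe that a node $v$ sends a total of $\sum_{i\in[m]}\sum_{e\ni v} X_{e,i}$ messages, a sum of independent Bernoulli$(p)$ variables of expectation $\Delta m p=O(\sqrt n)$, so $O(n)$ whp; symmetrically, each boss $b_i$ receives $|E_i|=O(n)$ edges whp. Each $b_i$ then computes the connected components of $G_i$ locally and, for every component intersecting the leader set, picks a canonical representative leader. A second application of Lemma~\ref{l:lenzen} delivers back to each $v$ and each $i\in[m]$ the (at most one) leader associated with $v$'s $G_i$-component; each node sends and receives $m=\sqrt n$ such messages. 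Every $v$ then reports a single pair $(v,\ell(v))$ to a coordinator for some $i$ in which $v$ is connected to a leader in $G_i$, or signals that it is \emph{leaderless}. The coordinator outputs the partition $\mathbb{C}'$ whose non-singleton classes group all nodes sharing the same $\ell(v)$; this gives $O(n/\log\log n)$ non-singleton classes (one per leader) plus singleton classes for leaderless nodes, and any refinement of a valid partition remains usable (Fact~\ref{f:graph:union}).

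The principal obstacle, and the content of the analysis I would have to establish, is that the number of leaderless nodes lying in \emph{growable} components of $\mathbb{C}'$ is $O(n/\log\log n)$ whp, so that the total number of active classes stays $O(n/\log\log n)$. For a fixed $v$ in a growable component, I expect to show that a single sample $G_i$ already attaches $v$ to a leader with probability $\Omega(1/\log\log n)$: since $\Delta=O(\log\log n)$, one can exhibit in the BFS exploration around $v$ a collection of short edge-disjoint witness structures whose corresponding edge-selection and leader-marking events are genuinely independent. Since the $m$ samples use independent coins, the failure probability for $v$ drops geometrically in $m$, yielding expected leaderless mass $O(n/\log\log n)$. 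The delicate step is upgrading this expectation to a high-probability statement: the per-node leaderless indicators are dependent across nodes (they share the leader assignment and the sampled edges), so I would condition on the leader-indicator vector and then decompose the count of leaderless nodes into a sum of variables that are independent after conditioning, indexed by the samples $E_1,\ldots,E_m$, before invoking a Chernoff or McDiarmid bound. Identifying this dominating family of independent random variables is precisely the technical heart of the argument.
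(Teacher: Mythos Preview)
There are two genuine gaps.

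\medskip
\textbf{The algorithm as you describe it fails on small ungrowable components.} You have each node report only a leader label $\ell(v)$ (or ``leaderless'') to the coordinator, and leaderless nodes become singletons in $\mathbb{C}'$. Take $G$ to be a perfect matching: each component $\{u,w\}$ contains no leader with probability $(1-p)^2=1-O(1/\log\log n)$, and in that event neither $u$ nor $w$ is ever connected to a leader in any sample (their sample components are subsets of $\{u,w\}$). Both become singletons in $\mathbb{C}'$, and each such singleton is growable (it has one incident edge). Thus $\mathbb{C}'$ has $\Theta(n)$ active classes. Your analysis clause ``for a fixed $v$ in a growable component'' does not cover these nodes, since $\{u,w\}$ is ungrowable in $G$. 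The paper repairs this by having every node send an \emph{edge} to the coordinator, not a label: when $C_i(v)$ has no leader, $v$ sends the first edge on a shortest path toward the node of smallest ID in $C_i(v)$, and ties among samples are broken by the lexicographically largest tuple $(n-\mathit{dist},|C_i(v)|,i,\cdot)$. This tie-breaking forces all nodes of a small ungrowable component $C$ to choose the \emph{same} sample (the largest index in which $C$ appears as a whole, which exists whp by Prop.~\ref{p:stprobability}), so the edges they send form a consistent spanning tree of $C$ and $C$ becomes an \emph{inactive} class of the coordinator's partition (Prop.~\ref{p:small:uncovered}).

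\medskip
\textbf{The concentration plan does not work as stated.} If your per-sample success probability $\Omega(1/\log\log n)$ includes the leader randomness (the dominant contribution is simply ``$v$ itself is a leader''), then the $m$ samples are \emph{not} independent trials for $v$: they share the same leader set, so failure does not decay geometrically in $m$. If instead you condition on the leader vector first, then for many $v$ (e.g.\ a node on a long path whose nearest leader is at distance $\Theta(\log\log n)$) the per-sample success probability is $(\log\log n)^{-\Theta(\log\log n)}$, not $\Omega(1/\log\log n)$; on a path there is no ``collection of short edge-disjoint witness structures'' around $v$. Moreover, decomposing ``indexed by the samples $E_1,\ldots,E_m$'' does not bound the quantity you need, which is $\bigl|\bigcap_i L_i\bigr|$ (nodes leaderless in \emph{every} sample), not a sum over $i$; changing a single $E_i$ can flip the status of $\Theta(n)$ nodes, so McDiarmid does not apply either. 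The paper obtains genuine independence by a different decomposition: it covers every component of size $>\sqrt{\log n}$ by an almost-partition into connected subtrees $T_1,T_2,\ldots$ of sizes in $[\sqrt{\log n},3\sqrt{\log n}]$ that are pairwise edge-disjoint and whose ``unique'' vertex sets are pairwise disjoint (Fact~\ref{f:treepartition}). Setting $X_j=0$ iff all edges of $T_j$ appear together in some sample \emph{and} some unique node of $T_j$ is a leader, the $X_j$ are independent $0/1$ variables, $X_j=0$ certifies no node of $T_j$ is bad, and Chernoff on $\sum_j X_j$ yields $O(n/\log\log n)$ bad nodes whp (Prop.~\ref{p:large:afew}). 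This tree-indexed family is the missing ``dominating family of independent random variables''.
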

Then, the next application of the algorithm from \cite{GhaffariParter2016} (Lemma~\ref{l:GP2016}) gives the final partition of the input graph into connected components, as summarized in Alg.~\ref{a:cc}. 

The proofs of Lemma \ref{l:degreereduction} and Lemma \ref{l:componentreduction} are presented in Section~\ref{s:degreereduction} and Section~\ref{s:componentreduction}, respectively. Using Lemmas \ref{l:GP2016}, \ref{l:degreereduction}, and \ref{l:componentreduction}, one can show that Alg.~\ref{a:cc} determines connected components of an input graph in $O(1)$ rounds, with high probability. This in turn gives the proof of Theorem~\ref{t:connectedcomponents}.

\begin{algorithm}
\caption{ConnectedComponents \Comment{$G(V,E)$ is the input graph}}
\label{a:cc}
\begin{algorithmic}[1]
\Statex \textbf{Sparsification Phase}
\State Execute Alg.~\ref{a:degreereduction} on the input graph for $s=\log\log n$ \label{s:cc1} 
\State Let $G_A$ and $\mathbb{C}_A$ be the graph and its $O(n/\log\log n)$-size partition determined in Alg.~\ref{a:degreereduction}\Comment{Lemma~\ref{l:degreereduction}}
\State $G_B\gets $ the graph of degree $O(\log\log n)$ determined in Alg.~\ref{a:degreereduction}\Comment{Lemma~\ref{l:degreereduction}}
\State Execute Alg.~GPReduction on $G_A$, using the partition $\mathbb{C}_A$\label{s:GP2016-1} \Comment{Lemma~\ref{l:GP2016}}
\Statex \textbf{Size-reduction Phase}
\State Execute Alg.~\ref{a:componentreduction} on $G_B$ 
\State $G'\gets$ the graph obtained in Alg.~\ref{a:componentreduction}, with its partition into $O(n/\log\log n)$ active components \label{s:cc7} \Comment{Lemma~\ref{l:componentreduction}}
\State Execute Alg.~GPReduction on $G'$, using its partition determined by Alg.~\ref{a:componentreduction}\label{s:GP2016-2}
\Comment{Lemma~\ref{l:GP2016}}
\State \kn{Combine connected components computed in steps \ref{s:GP2016-1} and \ref{s:GP2016-2}\Comment{Fact~\ref{f:graph:union}}}
\end{algorithmic}
\end{algorithm}

\subsection{Graph sparsification} \labell{s:degreereduction}
In this section we describe Sparsification Phase and prove Lemma \ref{l:degreereduction}. Let the \emph{coordinator} be a fixed node of the input network. For a partition $\mathbb{C}$ of a graph $G(V,E)$ into components, we use the following notations:
\begin{itemize}
\item $d(C) = \max\limits_{v \in C} d(v)$ is the \emph{degree} of the component $C$ of $\mathbb{C}$,
\item $I(v)$ is the ID of the component $C(v)$, according to a fixed labeling of components of $\mathbb{C}$.
\end{itemize}
The general idea of the reduction is to build components from the edges determined in the following two stages:
\begin{itemize}
\item \textbf{Stage 1.} For each node $v$, chose an edge connecting $v$ to its neighbour with the largest degree. Then, determine the complete partition with respect to the set of chosen edges. Moreover, set the \emph{degree} of each component of the obtained partition as the maximum of the degrees of its elements.
\item \textbf{Stage 2.} For each node $v$, chose an edge connecting $v$ to a component $C\neq C(v)$ with the largest degree. Determine the complete partition with respect to the set of edges chosen in both stages. 
\end{itemize}
For $s\le n$, we say that components with degree at least $s$ are \emph{awake} components, while components with degree smaller than $s$ are \emph{asleep} components. Similarly, all nodes from awake components are called \emph{awake nodes} and nodes from asleep components are called \emph{asleep nodes}. As we show below, the complete partition $\mathbb{C}$ determined by the edges chosen in Stages~1 and 2 satisfies the following conditions:
\begin{enumerate}
\item[a)] the size of each awake component is larger than $s$,
\item[b)] each asleep node of $\mathbb{C}$ does not have neighbors with the degree larger than $s$,
\end{enumerate}

are satisfied for each $s\le n$. Algorithm~\ref{a:degreereduction} contains a pseudo-code of an implementation of the above described idea in the congested clique model in $O(1)$ rounds, in accordance with requirements of Lemma~\ref{l:degreereduction}. The above properties a)--b) combined with Alg.~\ref{a:degreereduction} imply Lemma~\ref{l:degreereduction}. Below, we provide a formal proof of Lemma~\ref{l:degreereduction} based on the above described ideas.

\begin{algorithm}
\caption{ReduceDegree$(v,s)$\Comment{execution at a node $v$, for $s\in\NAT$}}
\label{a:degreereduction}
\begin{algorithmic}[1]
\State coordinator$ \gets u_1$\Comment{set the coordinator} \label{ag:s1} 
\State $v$ announces $d(v)$ to all nodes in $N(v)$ \label{ag:s2} 
\Statex \textbf{Stage~1} 
\State $v$ sends the edge $(u,v)$ to the coordinator, where $u$ is the node with the largest ID from the set of neighbors of $v$ with highest degree, i.e., $\{w\in N(v)\, |\, d(w) = \max\limits_{t\in N(v)} d(t)\} $ \label{ag:s3} 
\State the coordinator calculates the complete partition determined by the received edges, sends the message $(I(v), d(C(v)))$ to each $v$ \label{ag:s4} 
\State $v$ announces the received message $(I(v), d(C(v)))$ to all nodes in $N(v)$ \label{ag:s5} 
\Statex \textbf{Stage~2}
\State $v$ sends the edge $(u,v)$ to the coordinator, where 
$C(u)$ is the highest degree component incident to $v$, i.e., $C(v) \neq C(u)$ and $d(C(u))$ is maximal among components incident to $v$ 
\label{ag:s6} 
\State the coordinator calculates the complete partition determined by all received edges (i.e., in both stages), sends the message $(I(u), d(C(u)))$ to each $u$ \label{ag:s7} 
\State $v$ announces the received message $(I(v), d(C(v)))$ to nodes in $N(v)$ \label{ag:s8} 
\State \textbf{if} $d(C(v))\ge s$ \textbf{then} $v$ is awake \textbf{else} $v$ is asleep
\State $G_A\gets (V, E_A)$, where $E_A=\{ (u,v)\in E\,|\, u,v \text{ are awake}\}$
\State $\mathbb{C}_A\gets$ the partition consisting from awake components \label{ag:s11} \Comment{$C$ is awake iff $v$ awake for some $v\in C$}
\State $G_B\gets (V,E_B)$, where $E_B=\{ (u,v)\in E\,|\, u\text{ or }v \text{ is asleep}\}$
\Comment{$E_B=E\setminus E_A$}
\end{algorithmic}
\end{algorithm}

\begin{fact}
\labell{f:reductiontosparse}
The following conditions are satisfied at the end of an execution of Alg.~\ref{a:degreereduction}: (i)~there are at most  $n/s$ awake components; (ii)~the degree of the graph $G_B$ induced by edges incident to the asleep nodes is smaller than $s$.
\end{fact}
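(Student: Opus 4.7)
For part (ii), the plan is to bound the degree of each node $v$ in $G_B$. If $v$ is asleep then all its incident edges lie in $E_B$, but $d(v) \le d(C(v)) < s$ by definition of asleep. If $v$ is awake then its $G_B$-degree counts only its asleep neighbors, so it suffices to show that if $v$ has any asleep neighbor $w$ then $d(v) < s$. The key observation I will use is that $f(w)$, the Stage~1 choice of $w$, lies in $C_1(w) \subseteq C(w)$; asleepness of $C(w)$ forces $d(f(w)) < s$, and since $f(w)$ is by definition the maximum-degree neighbor of $w$ and $v \in N(w)$, we obtain $d(v) \le d(f(w)) < s$. This yields $d_{G_B}(v) < s$ in every case.

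For part (i), I will prove the stronger claim that every awake component $C$ has $|C| \ge s$; the bound on the number of awake components follows immediately. Fix awake $C$ and let $v_C$ be the maximum-degree node in $C$ with largest ID as tiebreak, so $d(v_C) = d(C) \ge s$. I will first record two auxiliary observations. First, every neighbor $u$ of $v_C$ satisfies $d(u) \le d(v_C)$: indeed, $f(v_C) \in C_1(v_C) \subseteq C$, hence $d(f(v_C)) \le d(C) = d(v_C)$, and by definition $d(u) \le d(f(v_C))$ for every $u \in N(v_C)$. Second, distinct $v_C$'s are pairwise non-adjacent: an edge $(v_{C_i}, v_{C_j})$ between different components would, via the Stage~1 tiebreaking rules, force each of these two nodes to lie in a component of strictly higher lex priority $(d(v_C), \mathrm{ID}(v_C))$ than the other's, a contradiction.

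The core target is then $N[v_C] \subseteq C$, which gives $|C| \ge d(v_C) + 1 \ge s$. My plan is to argue this by induction over the awake components ordered by decreasing lex value of $(d(v_C), \mathrm{ID}(v_C))$. The base case is the top-priority $C_1$: since asleep components have maximum degree strictly below $s$, $v_{C_1}$ is globally maximum in $(d, \mathrm{ID})$, so every $u \in N(v_{C_1})$ picks $f(u) = v_{C_1}$ and ends up in $C_1(v_{C_1}) \subseteq C_1$. For the inductive step on $C_i$ and a neighbor $w$ of $v_{C_i}$, either $f(w) = v_{C_i}$ (so $w \in C_i$ immediately), or $f(w) \ne v_{C_i}$ has strictly higher lex priority than $v_{C_i}$, placing $w$ in a Stage~1 component contained in some $C_j$ with $j < i$.

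The main obstacle will be ruling out this escape case by invoking Stage~2. My plan is to exploit that $v_{C_i}$'s Stage~2 edge targets its highest-degree incident Stage~1 component, which is thereby merged into $C_i$, and that $w$'s Stage~2 edge likewise targets a high-degree incident Stage~1 component for which $C_1(v_{C_i})$ is always a candidate of degree at least $d(v_{C_i})$. Combining these two Stage~2 moves with the non-adjacency of distinct $v_C$'s and the inductive hypothesis applied to $C_j$, I expect a short chain of Stage~2 edges to reconnect $w$'s Stage~1 component back into $C_i$, contradicting the supposed escape. Closing this inductive step is the technical heart of the argument and is where the most careful case analysis is required.
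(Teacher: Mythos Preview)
Your part (ii) is correct and essentially the paper's argument, just phrased from the awake node's side rather than the asleep node's side.

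For part (i), your target is right—show every awake component $C$ satisfies $|C|>s$ via $N[v_C]\subseteq C$ for the lex-maximal $v_C\in C$—but the induction over components, the non-adjacency of distinct $v_C$'s, and the second Stage~2 edge from $w$ are all unnecessary detours, and you have explicitly left the escape case unresolved. There is a two-line direct contradiction that avoids all of this, and it is exactly the paper's argument.

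Assume some neighbor of $v_C$ lies outside the Stage~1 component $C_1(v_C)$. Then $v_C$ has an incident Stage~1 component other than its own, so in Stage~2 it sends an edge to some $u\in N(v_C)\setminus C_1(v_C)$. This single edge forces $u$—and hence $u$'s Stage~1 choice $f(u)\in C_1(u)$—into the \emph{final} component $C$. But $u$ did not pick $v_C$ in Stage~1 despite $v_C\in N(u)$, so $(d(f(u)),\mathrm{ID}(f(u)))$ strictly dominates $(d(v_C),\mathrm{ID}(v_C))$. Now $f(u)\in C$ has strictly higher lex priority than $v_C$, contradicting the choice of $v_C$ as the maximum in $C$. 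Hence $N(v_C)\subseteq C_1(v_C)\subseteq C$ and $|C|\ge d(v_C)+1>s$.

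You already observed that $v_C$'s Stage~2 edge merges some Stage~1 component $D$ into $C$; the point you missed is that $D$ automatically contains a node, namely $f(u)$ for the edge's endpoint $u$, of strictly higher lex than $v_C$. That one observation ends the proof immediately. No inductive hypothesis on earlier $C_j$, no non-adjacency of the $v_C$'s, and no tracking of the particular $w$ you started from is needed: it does not matter which neighbor escapes, because $v_C$'s Stage~2 edge always drags into $C$ a witness that violates $v_C$'s maximality.
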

\begin{proof}
Let $\mathbb{C}$ be a partition of the input graph obtained from edges announced in Stages~1 and 2. Let $\prec$ be the lexicographic ordering of the pairs $(d(v), \text{ID}(v))$ for $v\in V$.

Firstly, we show that each awake component $C$ of $\mathbb{C}$ has at least $s+1$ nodes, which implies~(i). For an awake component $C$, let $v_{\text{max}}\in C$ be the element of $C$ corresponding to the largest tuple in the set $\{(d(v),\text{ID}(v))\,|\, v\in C\}$. Thus, $d(v_{\text{max}})\geq s$. \kn{Moreover, each $u\in N(v_{\text{max}})$ belongs to $C(v_{\max})$ after Stage~1 of the algorithm. Indeed, if we contrary assume that 
\begin{itemize}
\item $d(v_{\text{max}})<s$: 

Then $d(v)<s$ for each $v\in C$ and therefore $d(C)<s$ and $C$ is asleep. This contradicts the assumption that $C$ is awake.

\item some neighbour $u$ of $v_{\max}$ does not belong to $C(v_{\max})$ after Stage~1:

Then, let $U$ be the set neighbours of $v_{\max}$ which are not in $C(v_{\max})$ after Stage~1. In Stage~2, $v_{\max}$ sends an edge connecting it with some $u\in U$. According to the algorithm, each $u\in U$ sends an edge $(u,w)$ in Stage~1 such that $(d(v_{\text{max}}),\text{ID}(v_{\text{max}}))\prec(d(w),\text{ID}(w))$. Thus $v_{\max}$ and $w$ as above are in the same component of the partition obtained after Stage~2, which contradicts the choice of $v_{\max}$.
\end{itemize}
}
Given that $d(v_{\text{max}})\geq s$, $v_{\text{max}}\in C$ and $N(v_{\text{max}})\subseteq C$, we see that the size of $C$ is larger than $s$.

Now, we prove the property (ii). As the degree of all asleep nodes is smaller than $s$, it is sufficient to show that the degrees of all neighbors of asleep nodes are smaller than $s$ as well. Contrary, assume that a node $u$ is asleep, $v\in N(u)$, and $d(v)\ge s$. Then, $u$ reports a node $w$ in Stage~1 such that $(d(v),\text{ID}(v))\preceq (d(w),\text{ID}(w))$ which implies that $s\le d(v)\le d(w)$. This in turn implies that $u$ and $w$ are eventually in the same component and, by the fact that $d(w)\ge s$, they are both awake. This however contradicts the assumption that $u$ is asleep.
\end{proof}

Now, we apply Fact~\ref{f:reductiontosparse} for $s=\log\log n$ to prove Lemma~\ref{l:degreereduction}. Let $G_A$ be the subgraph of $G$ containing edges whose both ends are awake. By Fact~\ref{f:reductiontosparse}(i), the partition determined in the algorithm contains at most $n/s=O(n/\log\log n)$ awake components. That is, the partition $\mathbb{C}_A$ of $G_A$ has $O(n/\log\log n)$ active components (see step~\ref{ag:s11} of Alg.~\ref{a:degreereduction}). And, the partition $\mathbb{C}_A$ is known to the coordinator. As the nodes learn components' IDs and degrees of their neighbours in step~\ref{ag:s8} of Alg.~\ref{a:degreereduction}, they know which edges incident to them belong to $G_A$ and which to $G_B$. The graph $G_B$ contains the edges incident to asleep nodes. By Fact~\ref{f:reductiontosparse}(ii), the degree of $G_B$ is smaller than $s=\log\log n$.

\subsection{Size-reduction Phase}\labell{s:componentreduction}

In this section we provide $O(1)$ round algorithm reducing the number of active components for sparse graphs (Alg.~\ref{a:componentreduction}). Assuming that the degree of the input graph $G(V,E)$ is at most $\Delta \in O(\log \log n)$, our algorithm returns a partition of the input graph into $O\left(n / \log \log n \right)$ active components. Additionally, some fixed node (the coordinator) knows a spanning tree of each component in the final partition. Thus, Lemma~\ref{l:componentreduction} follows from the properties of the presented algorithm.

In Algorithm~\ref{a:componentreduction}, $C_i(u_j)$ denotes the component of the node $u_j$ in the $i$th sample graph $G_i$. Moreover, for a node $u_i\in V$, let $e_{(i,1)},\ldots,e_{(i,r)}$ for $r\le |N(u_i)|$ denote all edges $(u_i,u_j)$ such that $j<i$.

The algorithm randomly selects $m=\sqrt{n}$ subgraphs $G_1,\ldots,G_m$ of the input graph $G$, called \emph{samples}. Each sample will consist of $O(n)$ edges, with high probability. We will ensure that all edges of the sample $G_i$ are known to a fixed node called the \emph{boss} $b_i$ (lines \ref{ss:send:edge:b}--\ref{ss:send:edge}). Therefore, for each sample, we can locally determine its connected  components and its spanning forest. Finally, the results from samples are combined in order to obtain a partition of the input graph which consists of $O(n/\log \log n)$ active components. The key challenge here is how to combine knowledge about locally available components of sample graphs such that significant progress towards establishing components of the original input graph is achieved. To this aim, we select randomly $\Theta(n/\log \log n)$ leaders among nodes of the input network. More precisely, each node of the network assigns itself the status \emph{leader} with probability $1/\log\log n$, independently of other nodes. Thus, the number of leaders is $\Theta(n/\log\log n)$, with high probability. Then, the idea is to build a (global) knowledge about connected components of the input graph by assigning nodes to the leaders which appear together with them in connected components of samples. \kn{In order to determine connected components without their spanning trees, it is sufficient to apply the following procedure:} if the connected component of $u_j$ in the $i$th sample contains some leader, the boss $b_i$ will send a message to $u_j$ containing the ID of that leader. \kn{However, as we want to determine spanning trees as well, we need a more complicated approach (see lines \ref{ss:compred:leader:b}--\ref{ss:compred:leader:e} of Alg.~\ref{a:componentreduction}):}
\begin{itemize}
\item If the connected component of $u_j$ in the $i$th sample contains a leader, the boss $b_i$ determines a shortest path $P$ connecting $u_j$ and a leader in $C_i(u_j)$.
\item If the connected component of $u_j$ in the $i$th sample does \textit{not} contain a leader, the boss $b_i$ determines a shortest  path $P$ connecting $u_j$ and the node of $C_i(u_j)$ with the smallest ID.
\item
\kn{Then, $b_i$ sends a message to $u_j$ containing the ID of the neighbour of $u_j$ in $P$. The message sent to $u_j$ contains some additional information which we need in order to deal with nodes which are connected to different leaders in various samples and nodes which are not connected to any leader. (Details are explained in proofs of Prop.~\ref{p:leader:connect} and Prop.~\ref{p:small:uncovered}.)} 
\end{itemize}

We will say that a component $C$ is \emph{small} if $C$ is ungrowable 
and the size of $C$ is at most $s=\sqrt{\log n}$.\kn{\footnote{Our results should work
for smaller $s$, e.g., polynomial wrt $\log\log n$. However, as it does not affect
round complexity of the algorithm, we choose $s$ which makes analysis easier.}}
In the following, we split nodes of the input graph into three subsets:
\begin{itemize}
\item
$V_{\alpha}$: the nodes connected to a leader in at least one sample graph,
\item
$V_{\beta}$: the elements of small components of the input graph, \kn{which do not
belong to $V_{\alpha}$,}
\item
$V_{\gamma}$: the remaining nodes of the graph; thus, $v$ belongs to $V_\gamma$
when $v$ is not an element of a small component of the input graph and
there are no leaders in connected components of $v$ in samples
$G_1,\ldots,G_m$.
\end{itemize}
In the analysis of Alg.~\ref{a:componentreduction}, we show
that each node from $V_\alpha$ will belong to a component containing a leader in the
final partition $\mathbb{C}$ determined by the coordinator.
\begin{proposition}\labell{p:leader:connect}
Assume that $C_i(v)$ (i.e., the connected component of $v\in V$ in $G_i$) 
for some $i\in[m]$ contains a leader.
Then, the connected component of $v$ in the final partition $\mathbb{C}$
contains a leader as well.
\end{proposition}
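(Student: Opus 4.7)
The plan is to trace the pointer chain defined by the boss $b_i$ of the sample $G_i$ in the hypothesis and to show that, starting from $v$, it reaches a leader in at most $|C_i(v)|$ steps along edges that the coordinator uses when building $\mathbb{C}$.

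Because $C_i(v)$ contains a leader, by the description of Alg.~\ref{a:componentreduction} the boss $b_i$ computes, for every $u\in C_i(v)$, a shortest path in $G_i$ from $u$ to some leader of $C_i(v)$ and sends to $u$ the identifier of its neighbour $\pi_i(u)$ on that path. Writing $L_i\subseteq C_i(v)$ for the set of leaders in $C_i(v)$ and $d(u)=\mathrm{dist}_{G_i}(u,L_i)$, the edge $(u,\pi_i(u))$ lies in $G_i\subseteq G$ and satisfies $d(\pi_i(u))=d(u)-1$ whenever $d(u)\ge 1$, by optimality of the chosen path. Setting $v_0=v$ and $v_{j+1}=\pi_i(v_j)$ as long as $v_j\notin L_i$, the sequence $\bigl(d(v_j)\bigr)_{j\ge 0}$ is a strictly decreasing sequence of nonnegative integers with all $v_j\in C_i(v)$, so $v_k\in L_i$ for some $k\le |C_i(v)|$.

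Each edge $(v_j,v_{j+1})$ is, by construction, the pointer edge that $b_i$ reports to $v_j$ and is in turn forwarded to the coordinator in the subsequent step of Alg.~\ref{a:componentreduction}. Since the coordinator builds $\mathbb{C}$ as the complete partition of the graph on $V$ consisting of (at least) all pointer edges received from the bosses, the path $v=v_0,v_1,\ldots,v_k$ places $v$ and the leader $v_k$ in the same component of $\mathbb{C}$, which is exactly the claim. The main point requiring care is that each node $v_j$ may simultaneously receive pointers from several bosses and the coordinator must still obtain the particular pointer $\pi_i(v_j)$ arising from our fixed sample $G_i$ for every $j$ on the chain; this is a matter of verifying that the routing step of Alg.~\ref{a:componentreduction} aggregates, for every node and every sample in which its component contains a leader, the corresponding pointer edge at the coordinator, but it does not affect the structural argument above.
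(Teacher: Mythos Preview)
Your argument has a genuine gap, and it is precisely the point you flag in your final paragraph as ``a matter of verifying'' routing. In Alg.~\ref{a:componentreduction}, each node $u_j$ does \emph{not} forward to the coordinator every pointer it receives from the bosses. Rather, in line~\ref{a:l:largest} it selects the lexicographically largest tuple among all messages it received, sets $p(u_j)$ to the pointer carried by that tuple, and in line~\ref{s:ujsends} sends the coordinator the single edge $(u_j,p(u_j))$. The coordinator therefore sees exactly one outgoing edge per node, and that edge comes from the sample in which $u_j$ is closest to a leader (since the first coordinate of a tuple is $n-\textit{dist}$ when a leader is present, and $0$ otherwise). There is no guarantee that this sample is your fixed sample $G_i$ for every node $v_j$ along your chain; in general, different nodes on the chain will report edges from different samples, so the path $v_0,\ldots,v_k$ you construct inside $G_i$ is not the set of edges the coordinator actually receives.

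The paper handles this by arguing inductively on $\text{dist}_{\text{leader}}(v)$, the minimum over all samples of the distance from $v$ to a leader. The point is that whichever sample $i'$ wins the lexicographic comparison at $v$, the pointer $p_{i'}(v)$ that $v$ reports satisfies $\text{dist}_{\text{leader}}(p_{i'}(v))<\text{dist}_{\text{leader}}(v)$, so the inductive hypothesis applies to $p_{i'}(v)$ even though $i'$ may differ from the sample witnessing the hypothesis for $v$. Your chain-following argument can be repaired along these lines, but as written it relies on a false reading of the algorithm.
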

Moreover, we show that small components of the input graph are uncovered
by the coordinator with high probability, which determines the final components of nodes
from $V_\beta$.
\begin{proposition}\labell{p:small:uncovered}
The following property holds with high probability for 
each small component $C$ (i.e., an \kn{ungrowable} component of size at most $s=\sqrt{\log n}$) of the input graph:
$C$ is a connected component
of the final partition determined by the coordinator in Alg.~\ref{a:componentreduction}
or (at least one) leader belongs to $C$. 
\end{proposition}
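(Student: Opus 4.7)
The plan is: for any small component $C$ with no leader, show with extremely high probability that every edge of $G|_C$ appears in at least one of the $m = \sqrt n$ samples; this forces the union of the bosses' spanning forests to span $C$ completely, while the ungrowability of $C$ rules out spurious merges with vertices outside $C$.

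First I would bound the number of internal edges of $C$. The input graph to Alg.~\ref{a:componentreduction} has degree $\Delta = O(\log\log n)$ and $|C| \le s = \sqrt{\log n}$, so $|E(G|_C)| \le |C|\Delta/2 = O(\sqrt{\log n}\,\log\log n)$. Each edge is placed in sample $G_i$ independently with probability $1/\log\log n$; hence the probability that a fixed edge of $C$ is absent from every sample is $(1 - 1/\log\log n)^{\sqrt n} \le \exp(-\sqrt n/\log\log n)$, which is super-polynomially small. A union bound first over the $O(\sqrt{\log n}\,\log\log n)$ edges of $C$ and then over the at most $n$ small components of $G$ gives that, with high probability, every edge of every small component appears in at least one sample.

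Conditional on this event I would argue connectivity. Each boss $b_i$ returns, for every leader-free sub-component $D$ of $G_i|_C$, a rooted spanning tree of $D$ (encoded by the parent pointers $b_i$ sends to the vertices of $D$ and then relayed to the coordinator). By the standard observation that the union of spanning forests of a family of graphs has the same connectivity as the union of the graphs themselves, the union over $i \in [m]$ of these trees has exactly the same connected components as $\bigcup_{i\in[m]} G_i|_C$, which under the conditioning equals $G|_C$ and is therefore connected. Hence all of $C$ lies in a single component of the coordinator's partition. Since $C$ is ungrowable, no sample can contain an edge from $C$ to $V\setminus C$, so no tree returned by any $b_i$ can link $C$ to the outside, and the coordinator's component containing $C$ is exactly $C$.

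The principal obstacle is bookkeeping rather than probability: one has to verify, by inspecting the "additional information" sent by $b_i$ alongside the parent pointer (mentioned just before the statement), that vertices whose sample-components lack a leader are indeed grouped by the coordinator according to the union of the parent-pointer forests described above, rather than being handled in some other way that could split $C$. Once this is checked, the probabilistic content is a routine union bound with enormous slack, since $\exp(-\sqrt n/\log\log n)$ dominates any polynomial factor incurred by the union bound.
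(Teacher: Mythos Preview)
Your probabilistic estimate is fine, but what you call ``bookkeeping'' is the heart of the argument, and your description of the algorithm at that point is wrong. In Alg.~\ref{a:componentreduction} each node $u_j$ forwards to the coordinator a \emph{single} edge $(u_j,p(u_j))$, where $p(u_j)=p_i(u_j)$ for the unique $i$ giving the lexicographically largest tuple $(x,|C_i(u_j)|,i,p_i(u_j))$ (lines~\ref{a:l:largest}--\ref{s:ujsends}). The coordinator therefore does \emph{not} see the union of the bosses' forests; it sees one parent pointer per node. Your event ``every edge of $C$ lies in some sample'' is too weak to control this. Take $C=\{a,b,c\}$ with $\text{ID}(a)<\text{ID}(b)<\text{ID}(c)$, edges $ab,bc$, no leader in $C$, and suppose $ab\in E_1\setminus E_2$, $bc\in E_2\setminus E_1$, with neither edge in any other sample. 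Then the lex-max messages are $(0,2,1,a)$ for $a$, $(0,2,2,b)$ for $b$, and $(0,2,2,b)$ for $c$, so the coordinator receives only $(a,a),(b,b),(c,b)$ and outputs $\{a\},\{b,c\}$, splitting $C$. Thus the step ``relayed to the coordinator'' does not hold, and the union-of-forests picture collapses.

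The paper instead proves the stronger event that a fixed spanning tree of $C$ lies \emph{entirely in one sample} $G_i$ (Prop.~\ref{p:stprobability}: this happens in a given sample with probability at least $(1/\log\log n)^{|C|-1}$, so it fails in all $\sqrt{n}$ samples with probability $O(n^{-\omega(1)})$). Since $C$ is ungrowable, this makes $C$ a full connected component of $G_i$. Taking the largest such $i$, every $v\in C$ has $|C_j(v)|<|C|$ for all $j>i$, so the lex-largest message received by every $v\in C$ is the one from $b_i$; all of $C$ agrees on the same sample and forwards a coherent spanning tree of $C$ (rooted at the smallest-ID vertex) to the coordinator. To fix your argument you would need to upgrade the per-edge bound to this per-tree bound and replace the union-of-forests reasoning by this ``agreement on a single sample'' step.
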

While Prop.~\ref{p:leader:connect} and \ref{p:small:uncovered} concern $V_\alpha$ and $V_\beta$, 
we give an estimation of the size of $V_\gamma$ in the following proposition.
\begin{proposition}\labell{p:large:afew}
The number of nodes from $V_\gamma$ is $O(n/\log\log n)$, with
high probability.
\end{proposition}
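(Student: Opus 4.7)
The plan is to bound $|V_\gamma|$ in two stages: first control $\E[|V_\gamma|]$, then upgrade to a high-probability bound via concentration across independent per-component contributions. Write $|V_\gamma|=\sum_H Y_H$, where $H$ ranges over the connected components of the input graph $G$ with $|H|>s=\sqrt{\log n}$ and $Y_H := |V_\gamma\cap H|$. Because the leader statuses of the nodes of $H$ and the sample edges inside $H$ are independent across components, the variables $\{Y_H\}_H$ are mutually independent; this decomposition is the structural fact driving the proof and is exactly what the paper's hint about ``a sequence of independent random variables'' is referring to at the outer level.

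For the expectation bound, fix a big component $H$ and a node $v\in H$, and construct independent Bernoulli indicators $Z_{v,1},\ldots,Z_{v,K}$ each having individual success probability $\Omega(1/\log\log n)$ and each of which forces $v\in V_\alpha$ when equal to $1$. A natural starting point is the family of indicators ``$u\in L$ and $(u,v)\in E_i$ for some $i$'', one per neighbor $u\in N(v)$: for different $u$ these depend on disjoint randomness (distinct leader coins, distinct incident edges, independent sample coins), hence are independent, and each has probability $(1-(1-1/\log\log n)^{\sqrt n})\cdot 1/\log\log n=\Theta(1/\log\log n)$, using $m=\sqrt n$ and $p=1/\log\log n$. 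Since a node of full degree has only $\log\log n$ neighbors, the neighbor-only construction alone gives merely a constant failure probability; to push to $O(1/\log\log n)$ I would extend it to depth-$2$ BFS balls inside $H$, exploiting $|H|>\sqrt{\log n}$ to guarantee enough candidate leaders, and associating each candidate with its own disjoint slice of randomness (one leader coin, one pair of edges, one sample index). Summing $\Pr[v\in V_\gamma]$ over $v$ yields $\E[|V_\gamma|]=O(n/\log\log n)$.

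To upgrade to a w.h.p.\ bound, I would apply a Chernoff-type inequality to $\sum_H Y_H$, using independence of the $Y_H$'s, the deterministic bound $Y_H\le |H|$, and $\E[Y_H]=O(|H|/\log\log n)$. For components so large that a single $Y_H$ might dominate the sum, I would supplement this with an internal concentration argument: an Azuma martingale exposing one sample at a time inside $H$, relying on the fact that $\E[|L\cap H|]=|H|/\log\log n$ is then large enough that $H$ contains many leaders with high probability and $Y_H$ concentrates tightly around its small mean. For intermediate $H$ where $H$ might fail to contain a leader, the total contribution is a sum of independent $O(|H|)$ terms with combined expectation bounded by $|H|e^{-|H|/\log\log n}$, which contributes $o(n/\log\log n)$.

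The main obstacle is the construction of the independent indicators $Z_{v,j}$: naive attempts either share leader statuses across samples (for fixed $v$) or share sample edges across nearby nodes (for fixed $i$), destroying independence. The resolution is to partition the available random coins so that each indicator consumes its own disjoint slice (one leader coin, one edge, one sample), accepting the per-indicator success probability $\Theta(1/\log\log n)$ in exchange for genuine independence -- which the plentiful $m=\sqrt n$ samples and the edge rate $p=1/\log\log n$ make comfortably affordable. The tight interplay of the three parameters ($s=\sqrt{\log n}$, $m=\sqrt n$, $p=q=1/\log\log n$) is precisely what makes this trade work.
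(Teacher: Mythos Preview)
Your expectation bound is essentially fine in spirit (though ``depth-$2$ BFS balls'' is not literally right: on a long path the $2$-ball has only $O(1)$ nodes, and you really need to run BFS until you have collected $\Theta(\sqrt{\log n})$ nodes; Prop.~\ref{p:stprobability} then shows a spanning tree of that set survives in some sample whp, and the leader probability on $\sqrt{\log n}$ nodes gives failure $e^{-\sqrt{\log n}/\log\log n}=o(1/\log\log n)$). The real gap is in the concentration step.

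Decomposing $|V_\gamma|=\sum_H Y_H$ over the connected components $H$ of $G$ does give independent summands, but this is far too coarse: when $G$ has a single giant component $H$ of size $\Theta(n)$ you are left with one random variable $Y_H\in[0,n]$ and no sum to apply Chernoff to. Your proposed fallback, an Azuma/McDiarmid martingale ``exposing one sample at a time inside $H$'', does not work because the Lipschitz constants are huge. Flipping a single leader coin at a node $u$ can move every node of $\bigcup_i C_i(u)$ out of $V_\gamma$, which may be $\Theta(n)$ nodes; adding or removing one edge from one sample can merge two components and likewise shift $\Theta(n)$ nodes; and replacing an entire sample $E_i$ has bounded difference $n$ with only $\sqrt{n}$ samples, giving an Azuma deviation of order $n^{5/4}$. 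None of these exposures yields a usable bounded-difference inequality.

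What the paper does instead is exactly the refinement you are missing: it partitions (more precisely, almost-partitions) the spanning forest of the non-small components into subtrees $T_1,T_2,\ldots$ each of size in $[\sqrt{\log n},3\sqrt{\log n}]$ (Fact~\ref{f:treepartition}). The point is that the $T_i$ have pairwise disjoint \emph{edge} sets and pairwise disjoint sets of \emph{unique} vertices, so the events $A_i=\{$all edges of $T_i$ lie in some sample$\}$ and $B_i=\{$some unique node of $T_i$ is a leader$\}$ depend on disjoint coin families and the indicators $X_i=\mathbf{1}[\overline{A_i\cap B_i}]$ are genuinely independent $0/1$ variables, $\Theta(n/\sqrt{\log n})$ many of them. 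The number of bad nodes is at most $3\sqrt{\log n}\sum_i X_i$, and now Chernoff on $\sum_i X_i$ applies directly. In other words, the ``sequence of independent random variables'' in the paper's hint lives at the level of these small tree pieces, not at the level of whole components of $G$; your outer decomposition is one level too coarse to carry the concentration.
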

We postpone the proofs of the above propositions and show properties
of Alg.~\ref{a:componentreduction} following from them. 
This in turn directly implies Lemma~\ref{l:componentreduction}.

\begin{algorithm}
\caption{Reduce Components in Sparse Graph}
\label{a:componentreduction}
\begin{algorithmic}[1]
\State $m\gets \sqrt{n}$
\State \textbf{for} $i\in[m]$ \textbf{do} $b_i\gets u_i$\Comment{the bosses}
\State the coordinator $\gets u_n$ \Comment{fix the coordinator}
\For{$i \in [n] $}\label{ss:send:edge:b}\Comment{simultaneously, in one round}
\For{$k \in [|N(u_i)|]$}
\For{$j \in [\sqrt{n}]$}
 \State $u_i$ adds $e_{(i,k)}$ to $G_j$ and sends it to $b_j$ with probability $1 / \log \log n$
		\label{ss:send:edge}\Comment{determine $G_j$}
\EndFor
\EndFor
\EndFor
\State \textbf{for} each $i\in[m]$ \textbf{do} $b_i$ calculates a spanning forest $F_i$ induced by received edges
\State each node, with probability $1 / \log \log n$ becomes a leader, and announces it to all bosses $b_i$
\For{$i \in [\sqrt{n}]$}
\For{$j \in [n] $}
\If{$C_i(u_j)$ contains a leader}\label{ss:compred:leader:b}\Comment{$C_i(u_j)$: the component of $u_j$ in $G_i$}
	\State \kn{$\textit{dist}\gets$ the length of a shortest path from $u_j$ to a leader in $C_i(u_j)$}
	\State $p_i(u_j)\gets $ the first node on a path from $u_j$ to the closest leader in $C_i(u_j)$
	\State $b_i$ sends the message $(n-\textit{dist}, |C_i(u_j)|, i, p_i(u_j))$ to $u_j$ \label{a:sr:largest1}
\Else 
	\State $p_i(u_j)\gets$ the first node on a shortest the path from $u_j$ to the node with the smallest ID in $C_i(u_j)$
	\State $b_i$ sends the message $(0, |C_i(u_j)|, i, p_i(u_j))$ to $u_j$\label{ss:compred:leader:e} 
\EndIf
\EndFor
\EndFor
\For{$j\in[n]$}
	\State let $(x,|C_i(u_j)|,i,p_i(u_j))$ be the largest message according to the 
	lexicographic order received by $u_j$\label{a:l:largest}
	\State $p(u_j)\gets p_i(u_j)$ 
	\State $u_j$ sends the edge $(u_j,p(u_j))$ to the coordinator\label{s:ujsends}
\EndFor
\State the coordinator computes components determined by the received edges
\end{algorithmic}
\end{algorithm}

\begin{lemma} 
\labell{l:componentreductionalgcorrectness}
At the end of Algorithm~\ref{a:componentreduction},
a partition with at most $O(n/\log \log n)$ active components is determined, 
the coordinator knows this partition and 
a spanning tree for each component of the partition.
\end{lemma}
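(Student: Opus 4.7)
The plan is to deduce the lemma from Propositions~\ref{p:leader:connect}, \ref{p:small:uncovered}, and \ref{p:large:afew} by (i)~checking that the $n$ edges the coordinator receives already determine both the partition and a spanning tree of each component, and (ii)~counting active components through the decomposition $V=V_\alpha\cup V_\beta\cup V_\gamma$.

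For (i), in line~\ref{s:ujsends} each node $u_j$ sends exactly one edge $(u_j,p(u_j))$ to the coordinator, so the coordinator locally holds a graph $H$ on $V$ with at most $n$ edges (duplicates ignored). By construction, the connected components of $H$ are exactly the components of $\mathbb{C}$, so the coordinator knows $\mathbb{C}$. Moreover, a component of $H$ of size $k$ carries at most $k$ edges and is therefore either a tree or contains a single cycle; the coordinator breaks every such cycle locally to obtain a spanning tree of each component of $\mathbb{C}$.

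For (ii), a Chernoff bound on the independent leader draws (probability $1/\log\log n$ each) shows that the number of leaders is $L=\Theta(n/\log\log n)$ with high probability. I then claim that every active component $K$ of $\mathbb{C}$ either contains a leader or meets $V_\gamma$. If $K\cap V_\alpha\neq\emptyset$, Proposition~\ref{p:leader:connect} places a leader in $K$; if $K\cap V_\gamma\neq\emptyset$ we are done; otherwise $K\subseteq V_\beta$. Since every edge of $\mathbb{C}$ is also an edge of $G$, the $\mathbb{C}$-components refine the $G$-components, so $K$ is contained in a single $G$-component $C$ of size at most $\sqrt{\log n}$. Proposition~\ref{p:small:uncovered} forces either $C$ to coincide with a $\mathbb{C}$-component (in which case $K=C$ is ungrowable in $G$, hence inactive, contradicting activity of $K$) or $C$ to contain a leader $\ell$; in the latter subcase, the sampling content underlying Proposition~\ref{p:small:uncovered} (some sample $G_i$ spans $C$) forces $v\in V_\alpha$ for every $v\in C$, so Proposition~\ref{p:leader:connect} puts a leader in $K$, again contradicting $K\subseteq V_\beta$. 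Combining this with Proposition~\ref{p:large:afew}, the number of active components is at most $L+|V_\gamma|=O(n/\log\log n)$ w.h.p.

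The main obstacle will be the last subcase: when the small $G$-component $C$ contains a leader but is not itself a $\mathbb{C}$-component, I must rule out an ``orphan'' $v\in C$ lying outside $\ell$'s $\mathbb{C}$-component. This reduces to the union bound that already drives Proposition~\ref{p:small:uncovered}: a fixed spanning tree of $C$ survives in sample $G_i$ with probability at least $(\log\log n)^{-\sqrt{\log n}}$, and across the $\sqrt{n}$ independent samples the failure probability is super-polynomially small in $n$. Once this is made explicit, the case analysis above is routine bookkeeping, and Lemma~\ref{l:componentreduction} follows immediately.
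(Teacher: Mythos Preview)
Your proof is correct and follows the same architecture as the paper's: both use the decomposition $V=V_\alpha\cup V_\beta\cup V_\gamma$ together with Propositions~\ref{p:leader:connect}, \ref{p:small:uncovered}, and \ref{p:large:afew}, and both observe that the $n$ edges received in line~\ref{s:ujsends} let the coordinator compute $\mathbb{C}$ and spanning trees locally.

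The one place you go further than the paper is exactly the ``main obstacle'' you flag. The paper simply writes that Proposition~\ref{p:small:uncovered} makes every small $G$-component a $\mathbb{C}$-component (hence inactive), glossing over the ``or contains a leader'' clause of that proposition. You instead handle the subcase of a small component $C$ with a leader explicitly, by appealing to Proposition~\ref{p:stprobability} (the union bound in the proof of Proposition~\ref{p:small:uncovered} is already taken over \emph{all} small components) to conclude that some sample spans $C$, whence $C\subseteq V_\alpha$ and the case $K\subseteq V_\beta$ collapses. This is a legitimate tightening: the paper's sentence is slightly imprecise as written, and your argument makes the case analysis airtight. Your treatment of the spanning-tree part (at most $k$ edges on $k$ vertices, so tree or unicyclic) is likewise more detailed than the paper's one-line remark, but to the same effect.
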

\begin{proof}
Let $\mathbb{C}$ be the final partition determined by the coordinator.
Proposition~\ref{p:small:uncovered} implies that all small components of the input graph are also components
of $\mathbb{C}$. Thus, they are inactive in $\mathbb{C}$. As there are $\Theta(n/\log\log n)$ leaders
whp, Proposition~\ref{p:leader:connect} implies that all nodes from $V_\alpha$ belong to $O(n/\log\log n)$
active components. Finally, as there are only $O(n/\log\log n)$ nodes from $V_\gamma$, with high probability (Prop.~\ref{p:large:afew}), there are at most $O(n/\log\log n)$ components of $\mathbb{C}$ containing those nodes.

Finally, as the coordinator computes the final partition into connected components 
based on the received edges (step~\ref{s:ujsends}),
it can also determine spanning trees of the components of this partition.
\end{proof}

The remaining part of this section contains the proofs of Propositions~\ref{p:leader:connect}, \ref{p:small:uncovered}, and \ref{p:large:afew}.

\subsubsection{Connections to leaders: Proof of Prop.~\ref{p:leader:connect}}
\kn{Let $\text{dist}_{\text{leader}}(v)$ be the length of a shortest path
connecting a node $v$ with a leader in the sample graphs $G_1,\ldots,G_m$,
provided $v$ is connected with a leader in some sample.

We prove the proposition by induction with respect to the value of $\text{dist}_{\text{leader}}(v)$. The fact that $\text{dist}_{\text{leader}}(v)=0$ means that $v$ is a leader. Thus, $v$ certainly is in the connected component of the final partition $\mathbb{C}$ containing a leader. For the inductive step, assume that the proposition holds for each node $v$ such that $\text{dist}_{\text{leader}}(v)<j$ for some $j<n$. Let $v$ be a node connected to a leader in some sample such that  $\text{dist}_{\text{leader}}(v)=j$. Thus, a shortest path connecting $v$ and a leader in a sample has length $j$. Therefore, the largest tuple according to lexicographic ordering obtained from the bosses by $v$ is $(n-j, |C_i(v)|, i, p_i(v))$ for some $i\in[m]$, where $p_i(v)$ is a neighbour of $v$ \tj{in distance $j' < j$ from a leader}, i.e., $\text{dist}_{\text{leader}}(p_i(v))=j' < j$. Thus, 
\begin{itemize}
\item $p(v)$ will be assigned the value $p_i(v)$ in the algorithm and $v$ sends an edge $(v,p_i(v))$ to the coordinator,
\item as $\text{dist}_{\text{leader}}(p_i(v))= j' < j$, the inductive hypothesis guarantees that $p_i(v)$ is connected with a leader in the final partition $\mathbb{C}$ determined by the coordinator.
\end{itemize}
\tj{Therefore}, as $p_i(v)$ is connected with a leader in the partition $\mathbb{C}$ determined by the coordinator and the edge $(v,p_i(v))$ is also known to the coordinator, $v$ is connected with a leader in $\mathbb{C}$ as well.
}

\subsubsection{Spanning trees of small components: Proof of Prop.~\ref{p:small:uncovered}}

Before the formal proof of Prop.~\ref{p:small:uncovered}, we give a general statement regarding
connected subgraphs of size $s\le 3\sqrt{\log n}$ of the input graph $G$.
Below, we show that a spanning tree of $G'$ will appear in some sample, with high probability.
\begin{proposition}
\labell{p:stprobability}
For a given set of nodes $V'$ of size $s \leq 3\sqrt{\log n}$ such that 
the subgraph of $G$ induced by $V'$ is connected,
there is no spanning tree of $V'$ in all samples with probability at most $O\left(\frac{1}{n^{\omega\left(1\right)}}\right)$.
\end{proposition}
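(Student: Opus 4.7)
The plan is a direct one-tree-suffices argument, followed by a calculation showing that the size bound $s\leq 3\sqrt{\log n}$ is exactly what makes the resulting probability super-polynomially small.

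First, since the subgraph $H$ of $G$ induced by $V'$ is connected, fix any spanning tree $T$ of $H$; it has exactly $s-1$ edges. For each $i\in[m]$, the event ``$G_i[V']$ contains $T$'' implies ``$G_i$ contains a spanning tree of $V'$,'' so it suffices to bound the probability that $T\not\subseteq G_i$ for all $i\in[m]$. By the construction of the samples in steps \ref{ss:send:edge:b}--\ref{ss:send:edge} of Alg.~\ref{a:componentreduction}, each edge is placed in $G_i$ independently with probability $1/\log\log n$, and the choices across different $i$ are mutually independent. Hence
\[
\Pr[T\subseteq G_i] \;=\; \left(\frac{1}{\log\log n}\right)^{s-1},
\]
and by independence across samples,
\[
\Pr[\,\forall i\in[m]:\; T\not\subseteq G_i\,] \;\leq\; \left(1-\left(\frac{1}{\log\log n}\right)^{s-1}\right)^{m} \;\leq\; \exp\!\left(-\frac{\sqrt{n}}{(\log\log n)^{s-1}}\right),
\]
using $m=\sqrt{n}$ and $1-x\leq e^{-x}$.

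It remains to verify that $(\log\log n)^{s-1}$ is negligible compared to $\sqrt{n}$. Since $s\leq 3\sqrt{\log n}$,
\[
(\log\log n)^{s-1} \;\leq\; (\log\log n)^{3\sqrt{\log n}} \;=\; 2^{\,3\sqrt{\log n}\,\log\log\log n} \;=\; 2^{\,o(\log n)},
\]
and therefore
\[
\frac{\sqrt{n}}{(\log\log n)^{s-1}} \;\geq\; 2^{\,(1/2)\log n - o(\log n)} \;=\; n^{1/2 - o(1)}.
\]
Plugging this back into the bound above gives failure probability at most $\exp(-n^{1/2-o(1)})$, which is of the form $1/n^{\omega(1)}$, as claimed.

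There is essentially no serious obstacle: the only thing to be careful about is to use a \emph{fixed} spanning tree $T$ rather than trying to analyze ``$G_i[V']$ is connected'' via inclusion-exclusion over spanning trees of $H$ (which would needlessly complicate bookkeeping), and to observe that the bound $s\leq 3\sqrt{\log n}$ is calibrated precisely so that $(\log\log n)^{s}$ stays sub-polynomial while $m=\sqrt{n}$ remains polynomial. Independence of the sample constructions across $i\in[m]$, which is built into the algorithm, is the only probabilistic ingredient required.
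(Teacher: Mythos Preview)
Your proof is correct and follows essentially the same approach as the paper: fix one spanning tree of $V'$, compute the probability $(1/\log\log n)^{s-1}$ that it appears in a given sample, and use independence across the $\sqrt{n}$ samples together with $1-x\le e^{-x}$ to bound the failure probability by $\exp(-\sqrt{n}/(\log\log n)^{s-1})$. You actually spell out the final step---that $(\log\log n)^{3\sqrt{\log n}}=2^{o(\log n)}$ so the exponent is $n^{1/2-o(1)}$---more explicitly than the paper does.
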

\begin{proof}
A spanning tree of $V'$ consists of at most $s-1$ edges. 
Thus, it is present in some particular random sample
with probability $\prob(\mathit{present}) \geq \left(\frac{1}{\log \log n}\right)^{s-1}$. 
Thus, with probability at most $1-\prob(\mathit{present})$,
 it is not present in some particular random sample, and is not present in all samples simultaneously with probability \\
 \begin{align*}
 \left(1-\prob \left( \mathit{present} \right) \right)^{\sqrt{n}} &\leq 
 \left(1-\left(\frac{1}{\log \log n}\right)^{s-1}\right)^{\sqrt{n}} = 
 \left(1-\left(\frac{1}{\log \log n}\right)^{s-1}\right)^{ \left( \log \log n \right)^{s-1} \sqrt{n} \left( \log \log n \right)^{1-s} }\\
 &\leq 
 \left(\frac{1}{e}\right)^{\frac{\sqrt{n}}{ \left( \log \log n \right)^{s-1}}} =
 O\left(\frac{1}{n^{\omega\left(1\right)}}\right)
 \end{align*}
\end{proof}
Using Prop.~\ref{p:stprobability}, we will prove Prop.~\ref{p:small:uncovered}.
Let $C$ be a (ungrowable) connected component of the input graph $G$ such that $|C| \leq \sqrt{\log n}$ \kn{and no leader belongs to $C$.} By Prop.~\ref{p:stprobability}, $C$ has no spanning tree in all samples with probability at most $O(\frac{1}{n^{\omega(1)}})$. As there are at most $n$ small components, by union bound, the probability 
that there exists a small component of the input graph which is not a component of any sample is at most
$$n \cdot O\left(\frac{1}{n^{\omega(1)}}\right) = O\left(\frac{1}{n^{\omega(1)}}\right).$$ 
Therefore, with probability $1 - O(\frac{1}{n^{\omega(1)}})$, each small component of the input graph
is a component of some sample.
Thus, in order to prove Prop.~\ref{p:small:uncovered}, it is sufficient to show the following fact: 
if a small component $C$ of the input graph
is a component of some sample and no leader belongs to $C$, 
then $C$ will also be a component of the partition
$\mathbb{C}$ determined by the coordinator.
Assume that $C$ is a small component of $G$, $C$ is a component of
a sample $G_i$ for $i\in [m]$ and no leader belongs to $C$. 
W.l.o.g. assume that $i$ is the largest index of a sample
containing $C$ as a component. 
That is, $C$ is not a component of $G_j$ for $j>i$ and $C$ is a component of $G_i$.
Let $v_{\min}$ be the node of $C$ with the smallest ID.
Then, for each $v\in C$, the boss $b_i$ sends $(0, |C|, i, p_i(v))$ to $v$,
where $p_i(v)$ is the parent of $v$ in a tree $T$ rooted at $v_{\min}$,
consisting of shortest paths 
between $v_{\min}$ and other elements of $C$.
The choice of $i$ and the assumption that there are no leaders in $C$ guarantee that, 
for each $v\in C$, 
the message received by $v$ from $b_i$ is the largest message 
according to the lexicographic
ordering 
among messages received by $v$
from the bosses (see line~\ref{a:l:largest}).
Thus, each $v\in C$ sends $p(v)=p_i(v)$ to the coordinator.
Thanks to that fact, the coordinator learns about the described above spanning tree $T$
of $C$. 
Hence, $C$ is a component of the partition determined by the coordinator.
%
%
Therefore, the coordinator knows a spanning tree for every small connected component $C$ of $G$
with probability at least $1 - O(\frac{1}{n^{\omega(1)}})$.

\subsubsection{Leaderless nodes in large components: Proof of Prop.~\ref{p:large:afew}}
\labell{ss:leaderless}

We say that a node is \emph{bad} if it does not belong to a small component
nor to a component
containing a leader in the final partition $\mathbb{C}$ determined by
the coordinator. That is, a node is bad iff it belongs to $V_\gamma$.
In order to prove Prop.~\ref{p:large:afew}, it is sufficient to show that there are $O(\frac{n}{\log \log n})$ bad nodes.
Then, in the worst-case, bad nodes would be partitioned into $\Theta(\frac{n}{\log \log n})$ active components. 

The outline of the proof is as follows. 
Firstly, we cover all nodes from non-small components (i.e., from components of size at least $s=\sqrt{\log n}$) of the input graph
by connected sets $V_1, V_2,\ldots, V_r$ of sizes in the range $[s, 3s]$ for $s=\sqrt{\log n}$ (Fact~\ref{f:treepartition}) such that $V_i$'s are ``almost pairwise disjoint'' (a more precise definition will be provided later). 
Then, we associate the random $0/1$ variable $X_i$ to each set $V_i$
such that $X_i=0$ implies that no nodes from $V_i$ are bad. 
(In particular, $X_i=0$ holds when $V_i$ contains a leader and at least
one sample graph $G_j$ contains a spanning tree of $V_i$. Thus, by Prop.~\ref{p:leader:connect}, the nodes of $V_i$ are not bad if $X_i=0$.)
Importantly,
the variables $X_i$ are independent and the probabilities $\prob(X_i=1)$
are small. 
As the number of bad nodes is at most
$$\sum_{i\in[r]} |V_i|X_i\le 3\sqrt{\log n}\sum_{i\in[r]} X_i,$$
we prove the upper bound on $\sum_i X_i$ which ensures
that the number of bad nodes is $O(n/\log\log n)$
with high probability.

We start with a cover of non-small components by connected ``almost pairwise disjoint'' components of sizes
in the range $[s,3s]$, called an \emph{almost-partition}.
More precisely, we say that sets $A_1,\ldots,A_k$ form an
\emph{almost-partition} of a set $A$ iff $\bigcup_{i=1}^k A_i=A$
and, for each $j\in[k]$, $A_j$ contains at most one element belonging to other sets from
$A_1,\ldots, A_k$, i.e., $|A_j\cap\bigcup_{i\neq j}A_i|\le 1$.
The elements of $A_i\setminus \bigcup_{i\neq j}A_i$ are called \emph{unique}
for $A_i$.
\begin{fact}
\labell{f:treepartition}
Let $T$ be a tree of size at least $s\in\NAT$.
Then, there exists 
an almost-partition $T_1, T_2, \dots, T_k$ of $T$
such that 
\begin{equation}\label{e:almost}
T_i \mbox{ is a connected subgraph of } T \mbox{ and }
|T_i| \in [s, 3s]\mbox{ for each }i\in[k]. 
\end{equation}
\end{fact}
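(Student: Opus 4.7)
The plan is to prove Fact~\ref{f:treepartition} by induction on $|T|$, using a ``peeling'' construction that exploits the looseness of the upper bound $3s$. The base case is immediate: if $|T| \le 3s$, take $T_1 = T$, which is a valid one-piece almost-partition. For the inductive step $|T| > 3s$, I would root $T$ at an arbitrary vertex, define $n(v)$ as the size of the subtree rooted at $v$, and pick a deepest vertex $u$ with $n(u) \ge s$. By maximality of depth, every child of $u$ has subtree size strictly less than $s$.

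Next I would carve off a piece $T_1$ consisting of $u$ together with some of its children's subtrees, chosen greedily so that $|T_1| \in [s, 2s)$: starting from $\{u\}$ and adding child-subtrees one at a time, every addition contributes fewer than $s$ vertices, so the first time the accumulated size crosses $s$ it is still strictly below $2s$. Letting $T' = T \setminus (T_1 \setminus \{u\})$, the remaining tree is connected, meets $T_1$ only in $\{u\}$, and has size $|T'| = |T| - |T_1| + 1 > s$, so the inductive hypothesis can be applied to $T'$.

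The key subtlety is making the resulting collection $\{T_1\} \cup (\text{decomposition of } T')$ really satisfy the almost-partition condition: the piece in the decomposition of $T'$ that contains $u$ might already have a shared vertex $w \ne u$ with some other piece, so that after adding $T_1$ it would have two shared vertices. I would handle this by strengthening the inductive statement: prove that, for every tree of size $\ge s$ and every designated vertex $v$, a valid almost-partition exists in which the piece containing $v$ has no shared vertex other than possibly $v$ itself. Recursing on $T'$ with $u$ as the designated vertex makes the part containing $u$ have $u$ (or nothing) as its unique portal, so absorbing $T_1$ preserves the almost-partition property for every part; an ``external'' designated vertex $v' \in T$ of the outer call lies in $T'$ and is handled inside the recursive invocation.

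The main obstacle I expect is handling the degenerate cases in this recursion, especially when the remaining tree $T'$ barely exceeds $s$, or when $n(u)$ is so large that we must create several pieces sharing the same anchor $u$ before the rest of $T$ shrinks to a recursion-ready size. These are exactly the cases where the upper bound $3s$ (rather than the tighter $2s$) provides the needed slack: a residual of size below $s$ can be absorbed into the most recently created piece without exceeding $3s$, and multiple pieces all anchored at $u$ can be kept in $[s,2s] \subset [s,3s]$ while pairwise intersecting only in $\{u\}$, so the unique-shared-vertex condition continues to hold for each of them.
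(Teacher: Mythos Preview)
Your peeling construction is natural, but the recombination step has a real gap that your strengthening cannot close. When you attach $T_1$ to the almost-partition of $T'$ through the shared vertex $u$, the piece $P\ni u$ in the decomposition of $T'$ may already carry a portal $w\ne u$; after adding $T_1$ the piece $P$ has two shared vertices and the almost-partition condition fails for it. Your fix is the designated-vertex strengthening, but that strengthened statement is actually \emph{false} for $s\ge 2$: let $T$ be a star with centre $c$ and more than $3s-1$ leaves and designate a leaf $\ell$. Every connected subgraph of $T$ of size at least $2$ contains $c$, and since $|T|>3s$ there are at least two pieces, all of which share $c$; hence the piece through $\ell$ has portal $c\ne\ell$. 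Even ignoring this, your inductive step recurses on $T'$ with $u$ as the designated vertex and thereby forgets the outer designated vertex $v'$, so the strengthened induction does not close as written either.

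The paper avoids the recombination issue by never sharing a vertex across the inductive split. If some edge of $T$ has both sides of size at least $s$, it deletes that edge and recurses on the two \emph{disjoint} subtrees; the union of the two almost-partitions is then automatically an almost-partition of $T$. The only remaining case is that every edge has one side of size $<s$; then one shows there is a root $r$ all of whose children's subtrees have size $<s$, groups those subtrees into buckets of total size in $[s-1,3s-1]$, and adds $r$ to each bucket. All resulting pieces share only $r$, and this is exactly where the slack up to $3s$ is used. Your last paragraph is essentially this second case; what is missing is to replace the peel-and-share step by a disjoint split whenever one exists.
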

\begin{proof}
We prove the statement of the fact inductively.
If the size of $T$ is in the interval $[s,3s]$, the
almost-partition consisting from $T$ only satisfies the given
constraints.

For the inductive step, assume that the fact holds for trees
of size at most $n$ for $n>3s$.
Let $T$ be a tree of size $|T|=n+1$ on a set of nodes $V$.
In the following, we say that a subgraph $T'$ of $T$ 
induced by $V'\subseteq V$
is a \emph{subtree} of $T$ iff $T'$ and the subgraph of $T$ induced
by $V\setminus V'$ are trees.
Assume that $T$ contains a subtree $T'$ such that $|T'| \ge s$
and $|T|-|T'|\ge s$. Then, 
by the inductive hypothesis, there exists an almost-partition of $T\setminus T'$ and an 
(one element) almost-partition of $T'$ satisfying (\ref{e:almost}). Thus, an almost-partition of $T$ obtained from the almost-partitions of $T\setminus T'$ and of $T'$
%
satisfies (\ref{e:almost}) as well.

Now, assume that 
\begin{equation}\label{e:balans}
T \mbox{ does not contain a subtree } T' \mbox{ such that } |T'| \ge s \mbox{ and }|T|-|T'|\ge s.
\end{equation} 
%
For a tree with a fixed root $r$, $T(u)$ denotes a subtree of $T$ rooted at $u$. Now, we show an auxiliary property of trees satisfying (\ref{e:balans}).
\begin{claim}\labell{cl:balans}
Let $T$ be a tree satisfying (\ref{e:balans}). Then, one can chose $r\in T$ as the root of $T$ such that
%
%
\begin{enumerate}
\item[(a)]
$|T(v_i)|<s$ for each $i\in[k]$, where $\{v_1,\ldots,v_k\}$ is the set
of children of $r$.
\end{enumerate}
\end{claim}
\noindent\textit{Proof of Claim~\ref{cl:balans}.}
Let $r_0$ be an arbitrary node of a tree $T$ which satisfies (\ref{e:balans}).
If (a) is satisfied for $r=r_0$, we are done. 
Otherwise, we define the sequence $r_0, r_1,\ldots$ of nodes such that 
$r_{i+1}$ for $i\ge 0$ is the child of $r_i$ in $T$ (rooted at $r$) with the largest subtree. Then,
\begin{itemize}
\item[(i)]
$|T(r_i)|>|T(r_{i+1})|$, because $T(r_{i+1})$ is a subtree of $T(r_i)$,
\item[(ii)]
if $|T(r_i)|\ge s$, then $|T\setminus T(r_i)|<s$, by the assumption (\ref{e:balans}).
\end{itemize}
The condition (i) guarantees that $|T(r_{j})|\ge s$ and
$|T(r_{j+1})|<s$ for some $j\ge 0$. 
Thus, $|T(v)|<s$ for all children of $r_{j}$,
since $T(r_{j+1})$ has the largest size among subtrees rooted at children of $r_j$.
The assumption (\ref{e:balans}) implies also
that $|T\setminus T(r_j)|<s$. Thus, (a) is satisfied for $T$ if 
the root $r$ is equal to $r_j$. (\textit{Proof of Claim~\ref{cl:balans}})\qed

Using Claim~\ref{cl:balans}, we can choose the root $r$ of $T$ such that
$$s>|T(v_1)|\ge |T(v_2)|\ge\cdots\ge |T(v_k)|,$$
where $\{v_1,\ldots,v_k\}$ is the set of children of $r$ in $T$ (when
$r$ is the root of $T$).
Next, we split the set of trees $T(v_1),\ldots,T(v_k)$
into subsets such that the number of nodes in each subset is in
the range $[s-1,3s-1]$. 
Such a splitting is possible thanks to the facts that $|T(v_i)|<s$
for each $i\in[k]$ and $\sum_{i=1}^k|T(v_i)|\ge 3s$.
Finally, by adding the node $r$ to each subset,
we obtain an almost-partition satisfying (\ref{e:almost}).

\end{proof}

Using Fact~\ref{f:treepartition}, we will eventually prove Prop.~\ref{p:large:afew}.
Let $S_1, S_2, \dots, S_k$ be non-small connected components of the input graph $G$, 
i.e., $|S_i| > \sqrt{\log n}$ for each $i\in[k]$. By Fact~\ref{f:treepartition}, there exists an almost-partition of spanning trees of $S_i$'s into trees of sizes from the interval $[\sqrt{\log n}, 3\sqrt{\log n}]$. 
Let $\mathbb{T}=\{T_1,T_2,\ldots\}$ be the set of trees equal to the union of all those almost-partitions.
Observe that, according to the properties of almost-partitions, there are at least $\sqrt{\log n} - 1$ nodes \emph{unique} for $T_{i}$ for each $i$, i.e.,
nodes which belong to $T_{i}$ and do not belong to any other tree of
the above specified almost-partitions of the components $S_1, S_2, \dots, S_k$.
%
Thus, $|T|=O(n/\sqrt{\log n})$.
We associate random events $A_{i}$ and $B_{i}$ with each tree 
$T_{i}$, where
\begin{itemize}
\item
$A_{i}$ is the event that all edges of $T_{i}$ appear
in at least one sample graph among $G_1,\ldots,G_m$ in an execution of 
Alg.~\ref{a:componentreduction};
\item
$B_{i}$ is the event that at least one 
element of the set of nodes unique for $T_{i}$
has the status leader in an execution of 
Alg.~\ref{a:componentreduction}.
\end{itemize}
Importantly, all event $A_{i}$ and $B_{i}$ are independent,
thanks to the facts that each edge is decided to be included in
each sample graph independently, the sets of edges of $T_{i}$'s are 
disjoint, the set of nodes \textbf{unique} for $T_{i}$'s are disjoint 
as well, and the random choices determining whether a node has
a status leader are also independent.

\par By Prop.~\ref{p:stprobability} 
the probability of $A_{i}$ is $\prob(A_{i})=1-O(\frac{1}{n^{\omega(1)}})$. 
As $T_{i}$ has at least $s=\sqrt{\log n}$ unique nodes,
the probability of $B_{i}$ is at least
$$\prob(B_{i})\ge 1-\left(1-\frac{1}{\log \log n}\right)^{\sqrt{\log n}-1}.$$
Observe that the conjunction of the events $A_{i}$ and $B_{i}$
guarantees that the nodes of $T_{i}$ are connected to a leader
in the partition $\mathbb{C}$, i.e., they are not bad nodes.
Let $X_{i}$ be a 0/1 variable, where $X_{i}=0$ iff $A_{i}$ and 
$B_{i}$ are satisfied. 
Thus, $X_{i}=0$ implies that no node from $T_{i}$ is bad.
As the events $A_{i}$ and $B_{i}$ are independent, the probability
that $X_{i}=0$ (implying that no node from $T_{i}$ is bad)
can be estimated as follows:
\begin{alignat*}{2}
\prob(X_{i}=0) & > \prob(A_i)\cdot \prob(B_i)\\
& > 
\left(1 -O\left(\frac{1}{n^{\omega\left(1\right)}}\right)\right) \cdot \left(1-\left(1-\frac{1}{\log \log n}\right)^{\sqrt{\log n}-1}\right) \\
& > 
\left(1 -O\left(\frac{1}{n^{\omega\left(1\right)}}\right)\right)\cdot \left(1 -\frac{1}{e^{(\sqrt{\log n}-1)/ {\log \log n}}}\right) \\
& = 1 -O\left(\frac{1}{e^{(\sqrt{\log n}-1)/ {\log \log n}}}\right).
\end{alignat*}

\comment{
Let consider random variables $X_{j,i}$, which takes value $0$, when in some sample there exists spanning tree of $T_{S_j,i}$, and among nodes unique for tree $T_{S_j,i}$ there is a leader. Thus, variable $X_{j,i}$ takes value $1$ with probability at most $ O\left( \frac{1}{n^{\omega(1)}}\right) + 1 - \left( 1 -O\left(\frac{1}{e^{(\sqrt{\log n}-1)/ {\log \log n}}}\right) \right) = O\left(\frac{1}{e^{(\sqrt{\log n}-1)/ {\log \log n}}}\right)$
}
Thus, $\prob(X_{i}=1) =O\left(\frac{1}{e^{(\sqrt{\log n}-1)/ {\log \log n}}}\right)$.
The expected number of \textit{bad} nodes 
is upper bounded by
\begin{equation}\label{e:xi}
E\left[\sum\limits_i X_{i}\cdot |T_{i}|\right]
= O\left(\sqrt{\log n}\right) \sum\limits_i X_{i},
\end{equation}
since $|T_{i}| \in \Theta(\sqrt{\log n})$ for each $i$ under
consideration. 
%
The expected value of the sum of the variables $X_{i}$ can be estimated
as
\begin{alignat*}{2}
E\left[\sum\limits_i X_{i}\right] & =\sum\limits_i O\left(\frac{1}{e^{(\sqrt{\log n}-1)/ {\log \log n}}}\right)
= O\left(\frac{n}{\sqrt{\log n}}\right) O\left(\frac{1}{e^{(\sqrt{\log n}-1)/ {\log \log n}}}\right) \\
&= O\left(\frac{n}{\sqrt{\log n}\cdot e^{(\sqrt{\log n}-1)/ {\log \log n}}}\right).
\end{alignat*}
As $X_{i}$ are independent $0-1$ random variables, 
\begin{equation}\label{e:chern}
\sum\limits_{i} X_{i} \in O\left(\frac{n}{\sqrt{\log n}\cdot e^{(\sqrt{\log n}-1)/ {\log \log n}}}\right)
\mbox{ with high probability}
\end{equation}
by a standard Chernoff bound.
Therefore, by (\ref{e:xi}) and (\ref{e:chern}), the number of \textit{bad} nodes is 
$$O(\sqrt{\log n})\cdot O\left(\frac{n}{\sqrt{\log n}\cdot e^{(\sqrt{\log n}-1)/ \log \log n}}\right) = O\left(\frac{n}{e^{(\sqrt{\log n}-1)/ \log \log n}}\right) = O\left(\frac{n}{\log \log n}\right)$$ 
with high probability.
%
This fact finishes the proof of Prop.~\ref{p:large:afew}.

	\section{\MST{} in $O(1)$ rounds}\labell{s:MST}

\par In order to find \MST{} of a given input graph, we will use 
the $O(1)$ round \CC{} algorithm and the reduction from 
\cite{HegemanPPSS15}. 
The \MST{} problem for a given graph can be reduced, by using the KKT random sampling \cite{Karger:1995:RLA:201019.201022} to two consecutive instances of \MST{}, each for a graph with $O(n^{3/2})$ edges. 
The authors of \cite{HegemanPPSS15} observed that the \MST{} problem for a graph with $O(n^{3/2})$ edges can be reduced to $\sqrt{n}$ instances of the \CC{} problem simultaneously. In each of those $\sqrt{n}$ instances of the \CC{} problem, the set of neighbours of each node is a subset of the set of its neighbours in the original input graph with $O(n^{3/2})$ edges (a nice exposition of the reduction is also given in \cite{GhaffariParter2016}). 
For further references, we state these reductions more precisely.
\begin{lemma}\cite{HegemanPPSS15}\labell{l:red1:hegeman}
Let $G(V,E,c)$ be an instance of the \MST{} problem. 
There are congested clique $O(1)$ rounds algorithms $A_1, A_2$ such 
that, with high probability,
\begin{enumerate}
\item
$A_1$ builds $G_1(V,E_1,c)$ with $O(n^{3/2})$ edges such that
$E_1\subseteq E$ and,
\item
given a minimum spanning forest of $G_1$, $A_2$ builds 
$G_2(V,E_2,c)$ with $O(n^{3/2})$ edges such that 
$E_2\subseteq E$ and
a minimum spanning tree of $G_2$ is also
a minimum spanning tree of $G$.
\end{enumerate}
\end{lemma}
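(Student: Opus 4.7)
The plan is to follow the Karger--Klein--Tarjan (KKT) random sampling approach~\cite{Karger:1995:RLA:201019.201022} with sampling probability $p = 1/\sqrt{n}$. This choice balances the expected size of the sampled subgraph $G_1$ against the expected number of $F$-light edges of $G$ relative to the MSF $F$ of $G_1$, making both quantities $O(n^{3/2})$ with high probability.

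Algorithm $A_1$ is a single-round sampling procedure. For every edge $(u_i,u_j)\in E$ with $i<j$, node $u_i$ flips an independent coin of success probability $p$, adds $(u_i,u_j)$ to $E_1$ iff the coin succeeds, and transmits the one-bit outcome to $u_j$ so that both endpoints agree on membership in $E_1$. Since every node can transmit a distinct $O(\log n)$-bit message to every other node in one round, this respects the congested clique bandwidth. A Chernoff bound yields $|E_1|=O(n^{3/2})$ and, moreover, degree $O(\sqrt{n})$ at every node of $G_1$, with high probability.

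Algorithm $A_2$ extracts the set $E_2$ of $F$-light edges of $G$ with respect to the supplied MSF $F$ of $G_1$. The KKT sampling lemma guarantees $|E_2|=O(n/p)=O(n^{3/2})$ with high probability, and the cycle property ensures that every MST of $G$ uses only $F$-light edges; hence any MST of $G_2(V,E_2,c)$ is also an MST of $G$. The implementation challenge is to classify each edge of $G$ as $F$-light or $F$-heavy in $O(1)$ rounds. I would do so by first making $F$ globally known: since $|F|\le n-1$, Lenzen's routing (Lemma~\ref{l:lenzen}) lets us redistribute the edges of $F$ so that each node holds $O(1)$ of them, after which every holding node broadcasts its $O(1)$ edges to all others in $O(1)$ rounds. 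Once each node has the entire forest $F$ in memory, it locally classifies each of its incident edges $(u,v)\in E$ by computing the maximum-weight edge on the $u$--$v$ path in $F$ and comparing it with $c(u,v)$. The $F$-light incident edges of each node then constitute its contribution to $E_2$.

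The main obstacle is the global dissemination of $F$ in $O(1)$ rounds, because a single node may have degree $\Theta(n)$ in $F$ and cannot broadcast all of its incident $F$-edges on its own in constant time; the remedy is the load-balancing step via Lenzen's routing described above, which spreads $F$ across all nodes before the broadcast phase. Everything else is standard: Chernoff for the size and degree bounds of $G_1$, the KKT sampling lemma for the size bound on $E_2$, and the cycle property for the equivalence of the MSFs of $G_2$ and $G$.
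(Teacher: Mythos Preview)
The paper does not give its own proof of this lemma; it is quoted verbatim as a result of Hegeman et al.\ \cite{HegemanPPSS15} and only referenced in the surrounding text as ``the KKT random sampling \cite{Karger:1995:RLA:201019.201022}'' reduction. Your sketch is exactly that standard reduction: sample edges independently with probability $p=1/\sqrt{n}$ to get $G_1$, compute its MSF $F$, and keep only the $F$-light edges of $G$ to form $G_2$; the size bounds come from Chernoff and the KKT sampling lemma, and the correctness of $G_2$ from the cycle property. So your approach is correct and coincides with the intended one.

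One minor technical remark: the original KKT sampling lemma bounds the \emph{expected} number of $F$-light edges by $n/p$, not a high-probability bound. To get the whp statement as phrased in the lemma you need the sharper negatively-correlated version of the sampling lemma (as used in \cite{HegemanPPSS15}), or an equivalent concentration argument; you may want to make that explicit rather than asserting whp directly.
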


\begin{lemma}\cite{HegemanPPSS15}\labell{l:red2:hegeman}
Let $G(V,E,c)$ be an instance of \MST{}, where $|E|=O(n^{3/2})$.
There is a congested clique $O(1)$ round algorithm which
reduces the \MST{} problem for $G$ to $m=\sqrt{n}$ instances 
$G_i(V,E_i)$ for $i\in[m]$ of
the \CC{} problem, such that (i)~$E_1\subseteq E_2\subseteq\cdots\subseteq E_m=E$;
(ii)~each node $v$ knows edges incident to $v$ in $E_i$
for each $i\in[m]$ at the end of an execution of the algorithm.
\end{lemma}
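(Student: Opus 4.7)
The plan is to reduce MST to nested CC instances via weight-sorting. First, sort all $O(n^{3/2})$ edges of $G$ by weight using Lenzen's sorting algorithm, which runs in $O(1)$ rounds even when individual nodes hold up to $O(n)$ items. After sorting, identify $m-1 = \sqrt{n}-1$ threshold weights $w_1 \le w_2 \le \cdots \le w_{m-1}$, where $w_i$ is the weight of the $(i\cdot\lceil|E|/m\rceil)$-th edge in sorted order, and define $E_i = \{e \in E : c(e) \le w_i\}$ for $i < m$ and $E_m = E$. By construction $E_1\subseteq E_2\subseteq\cdots\subseteq E_m$, and each ``slab'' $E_{i+1}\setminus E_i$ has size $O(n)$, giving property (i).

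Next, the $m-1$ thresholds must be made known globally so that every node can classify its own incident edges. Each node that holds a threshold after sorting broadcasts it to all $n$ nodes: this amounts to $O(\sqrt{n})$ messages sent per threshold-holder and $O(\sqrt{n})$ messages received per node, which Lenzen's routing (Lemma~\ref{l:lenzen}) dispatches in $O(1)$ rounds. Once a node $v$ knows $w_1,\ldots,w_{m-1}$, it classifies each of its initially-known incident edges by weight and records which of them lie in $E_i\cap N(v)$ for each $i\in[m]$; this gives property (ii).

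The correctness of the resulting reduction to CC rests on the cycle property of MST: an edge $e$ of weight $c(e)$ lies in the MSF of $G$ iff its endpoints are in distinct connected components of the subgraph induced by edges of weight strictly less than $c(e)$. Therefore, once CC is available on all $G_i = G(V,E_i)$, each node can locally decide membership in the MSF for each of its incident edges $e\in E_{i+1}\setminus E_i$ by comparing the component labels of the endpoints of $e$ in $G_i$. Breaking ties in edge weights via lexicographic comparison with endpoint IDs ensures a unique MSF and makes the threshold cut-off well-defined.

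The main obstacle is that a node may have degree up to $\Theta(n)$, so it participates in sorting $\Theta(n)$ items and in computing its classification against $\sqrt{n}$ thresholds; both operations touch $\Omega(n)$ local data and $O(n)$ network messages per round. Lenzen's sorting and routing primitives are tailored precisely to this $O(n)$-per-node load regime, so invoking them in black-box fashion resolves the obstacle, giving the desired $O(1)$-round implementation of the entire reduction.
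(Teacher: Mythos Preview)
The paper does not supply its own proof of this lemma; it is imported from \cite{HegemanPPSS15} and stated without argument. Your outline follows the standard construction from that source: sort the $O(n^{3/2})$ edges by weight, cut them into $\sqrt{n}$ slabs of $O(n)$ edges each, and let $E_i$ be the union of the first $i$ slabs. The construction of the nested instances $G_1,\ldots,G_m$ and the broadcasting of the thresholds are correct and match the known approach.

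There is, however, a genuine gap in your recovery step. You claim that a node can decide MSF-membership of an incident edge $e\in E_{i+1}\setminus E_i$ simply by checking whether its endpoints lie in different components of $G_i$. This test is too permissive: $G_i$ contains all edges of weight at most $w_i$, but not all edges of weight strictly less than $c(e)$---edges in the same slab as $e$ but of smaller weight are absent from $G_i$. If two such slab-mates both cross the same pair of $G_i$-components, your test accepts both, whereas only the lighter one belongs to the MSF. The cycle-property characterization you invoke is correct, but $G_i$ is not the graph it should be applied to for an arbitrary edge in slab $i{+}1$.

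The standard fix, and the reason the slabs are sized $O(n)$, is that after the CC computations one routes the entirety of slab $i{+}1$ to a single designated node; that node, knowing the $G_i$-component labels of all vertices, locally computes the MSF of the slab on the contracted graph, and the union of these per-slab forests is the MST of $G$. This extra gathering-and-local-MSF step costs $O(1)$ rounds via Lenzen's routing and is what actually completes the reduction.
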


If we show that our \CC{} algorithm can be executed 
simultaneously for $\sqrt{n}$ instances satisfying the properties
from Lemma~\ref{l:red2:hegeman} in $O(1)$ rounds, then we obtain the $O(1)$ round
randomized algorithm for
\MST{}.
\par As shown in Section~3 of \cite{GhaffariParter2016}, the algorithm ReduceCC 
applied in GPReduction (Lemma~\ref{l:GP2016}) can be executed in parallel for $\sqrt{n}$ instances as above. 
As the algorithm GPReduction satisfying Lemma~\ref{l:GP2016} consist of $O(1)$ executions of ReduceCC
(see the proof of Lemma~\ref{l:GP2016}), this algorithm can be executed in parallel 
for $\sqrt{n}$ instances of the \CC{} problem satisfying conditions from Lemma~\ref{l:red2:hegeman}.
\par Therefore, in order to prove that our \CC{} algorithm can be executed in parallel for $\sqrt{n}$ instances described in Lemma~\ref{l:red2:hegeman}, it is sufficient to show that \kn{executions of Algorithm~\ref{a:degreereduction} and Algorithm~\ref{a:componentreduction} called in Alg.~\ref{a:cc} can be executed in parallel for such instances.} 
In Sections~\ref{ss:parallel1} and \ref{ss:parallel2}, we show that it is the case. 
Finally, in Section~\ref{ss:parallel:final}, we discuss a parallel execution of the whole
Alg.~\ref{a:cc} for all those instances.
This gives a $O(1)$ round
randomized algorithm determining a minimum spanning tree which proves
Theorem~\ref{t:MST}.
\comment{
\begin{theorem}\labell{t:MST}
There is a randomized algorithm in the congested clique model that computes
MST in $O(1)$ rounds, with high probability.
\end{theorem}
}

\subsection{Parallel executions of Alg.~\ref{a:degreereduction}}\labell{ss:parallel1}

In this section we show that Alg.~\ref{a:degreereduction} can be executed
for $\sqrt{n}$ related sparse instances of the \CC{} problem in parallel, 
as stated in the following lemma.
\begin{lemma}\labell{l:parallel1}
Let $G_1(V,E_1), \ldots, G_m(V,E_m)$ for $m=\sqrt{n}$ 
be the input graphs in the 
congested clique model such that $|E_i|=O(n^{3/2})$ for each
$i\in[m]$, $E_1\subseteq E_2\subseteq\cdots\subseteq E_m$,
and $v_j$ knows its neighbours in each of the graphs
$G_1,\ldots, G_m$.
Then, Alg.~\ref{a:degreereduction} can be executed simultaneously
for $G_1,\ldots,G_m$ in $O(1)$ rounds in the following framework:
\begin{itemize}
\item
for each $i\in[m]$, $j\in[n]$, the node $u_j$ has assigned a node
$\text{proxy}(i,j)\in\{u_1,\ldots,u_n\}$ such that $\text{proxy}(i,j)$
works on behalf of $u_j$ in the $i$th instance of the \CC{} problem;
\item
for each $j\in[n]$, the node $u_j$ works \tj{as the proxy on behalf of $O(1)$ 
nodes (possibly in many instances of the \CC{} problem),} i.e., 
\kn{$$|\{k\,|\,u_j=\text{proxy}(i,k)\}|=O(1).$$}
\end{itemize}
\end{lemma}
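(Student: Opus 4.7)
The plan is to simulate each of the eight communication steps of Algorithm~\ref{a:degreereduction} simultaneously across all $m = \sqrt{n}$ instances in $O(1)$ rounds, invoking Lenzen's routing (Lemma~\ref{l:lenzen}) for all data movement. First I would distribute the coordinator role by setting $c_i = u_i$ for each $i \in [m]$; since $m \le n$, these coordinators are distinct physical nodes, and no node is coordinator of more than one instance. Throughout the simulation, whenever the algorithm prescribes that logical node $u_k$ in instance $i$ performs a local action, it is the physical node $\text{proxy}(i,k)$ that executes it and holds the corresponding local state. The per-instance adjacency data can be pre-distributed to the proxies in a constant number of rounds by Lenzen's routing, since this is a single all-to-all data movement with total volume $O(n^{2})$.

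Given the framework, I would verify Lenzen's $O(n)$-per-node message budget for each step by separating the communication into three patterns. For the edge broadcasts in steps~\ref{ag:s2}, \ref{ag:s5}, \ref{ag:s8}, every edge $(u_k,u_{k'}) \in E_i$ induces a single message between $\text{proxy}(i,k)$ and $\text{proxy}(i,k')$; the total volume is $\sum_{i \in [m]} 2|E_i| = O(n^{2})$, which averages to $O(n)$ per physical node. For the proxy-to-coordinator reports in steps~\ref{ag:s3}, \ref{ag:s6}, each of the $n$ logical nodes per instance contributes exactly one message, so each coordinator $c_i$ receives $O(n)$ messages, while each physical node $u_j$ sends at most $O(m) = O(\sqrt{n})$ messages overall (one per instance in which it proxies any logical node). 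For the coordinator broadcasts in steps~\ref{ag:s4}, \ref{ag:s7}, each $c_i$ computes the partition locally from the $O(n)$ edges it received and sends $O(n)$ outgoing messages, while each receiving physical node gets at most $O(m) = O(\sqrt{n})$ incoming messages. In all three patterns the hypotheses of Lemma~\ref{l:lenzen} are met, so the routing completes in $O(1)$ rounds.

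The hard part will be constructing a proxy assignment for which the edge-broadcast load is actually $O(n)$ at every physical node, not merely $O(n)$ on average. The $O(1)$-per-instance condition alone does not suffice: a pathological assignment could make a single $u_j$ proxy for high-degree logical nodes in every instance simultaneously, yielding up to $O(\sqrt{n}) \cdot O(n) = O(n^{3/2})$ messages. A careful construction must weight the number of (instance, logical node) pairs assigned to $u_j$ against the total degree $\sum_i |N_i(k)|$ of the nodes it simulates. Since the aggregate work $\sum_{k}\sum_{i} |N_i(k)| = O(n^{2})$, a balanced greedy assignment produces an $O(n)$ load per physical node while keeping at most $O(1)$ logical nodes per instance on each proxy; the nesting $E_1 \subseteq E_2 \subseteq \cdots \subseteq E_m$ simplifies the bookkeeping because per-node degrees are monotone in $i$. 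Once this assignment is in hand, the local portions of Algorithm~\ref{a:degreereduction} (computing degrees, selecting a highest-degree neighbour or incident component, and the final classification into $G_A$, $G_B$, and $\mathbb{C}_A$) are performed independently by each proxy on its prepared local state, adding no further rounds.
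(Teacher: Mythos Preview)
Your overall approach mirrors the paper's: separate the coordinator-related steps (which parallelize trivially with $c_i=u_i$) from the neighbourhood-broadcast steps~\ref{ag:s2},~\ref{ag:s5},~\ref{ag:s8} (which need load-balancing via proxies), and exploit the chain $E_1\subseteq\cdots\subseteq E_m$. You also correctly flag that the $O(1)$-logical-nodes-per-proxy condition alone does not bound the edge-broadcast load. However, two technical claims are not established as written.

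First, your assertion that the adjacency data can be pre-distributed to the proxies in $O(1)$ rounds ``since this is a single all-to-all data movement with total volume $O(n^{2})$'' is not valid: Lenzen's routing (Lemma~\ref{l:lenzen}) requires a per-node $O(n)$ bound on sends and receives, not a global volume bound. A node $u_j$ with $d_{G_m}(u_j)=\Theta(n)$ would naively have to ship its adjacency list to each of its $\Theta(\sqrt{n})$ proxies, i.e.\ $\Theta(n^{3/2})$ outgoing messages. The paper resolves this in two moves: it uses the nesting to \emph{compress} the adjacencies of $u_j$ into a set $T_j$ of size $d(u_j)$ (for each neighbour $v$, record only the smallest $l$ with $(u_j,v)\in E_l$); then $u_j$ scatters $T_j$ in $\sqrt{n}$-size chunks to its proxies (Stage~1), and the proxies of $u_j$ all-to-all exchange their chunks (Stage~2). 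Both stages meet the per-node $O(n)$ budget.

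Second, ``a balanced greedy assignment'' is a description, not a construction. The paper's explicit assignment is: set $d(u_j):=d_{G_m}(u_j)$, give $u_j$ exactly $l_j=\lceil d(u_j)/\sqrt{n}\rceil$ proxy nodes, and let each such proxy simulate $u_j$ in $\min\{\sqrt{n},\lceil n/d(u_j)\rceil\}$ instances. Then (i) the total number of proxies is $\sum_j l_j \le n + (1/\sqrt{n})\sum_j d(u_j)=O(n)$, so Lemma~\ref{l:auxiliary} applies; (ii) each proxy serves exactly one logical node, hence the $O(1)$ clause of the lemma holds; and (iii) in any broadcast step a proxy of $u_j$ sends and receives at most $d(u_j)\cdot\lceil n/d(u_j)\rceil=O(n)$ messages, so Lenzen applies per step. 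Your sketch is pointed in this direction, but the specific choice of $l_j$ and the per-proxy instance count is the substance of the argument, and it is what simultaneously yields the $O(n)$ load bound and the $O(1)$-node-per-proxy guarantee.
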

\noindent The proof of Lemma~\ref{l:parallel1} is presented in the remaining part of this section.
As there are $m$ instances of the \CC{} problem to solve (and therefore
$m$ instance of Alg.~\ref{a:degreereduction}), we can set
$m$ coordinators $c_1,\ldots,c_m$ such that, e.g., the node $u_i$ acts
as the coordinator $c_i$.

\par Steps \ref{ag:s3}, \ref{ag:s4}, \ref{ag:s6} and \ref{ag:s7}
of Alg.~\ref{a:degreereduction} either do not require any communication (steps \ref{ag:s4} and \ref{ag:s7}) or each node transmits a single message to the coordinator (steps \ref{ag:s3} and \ref{ag:s6}). Thus, these steps can be executed in parallel in
$O(1)$ rounds: instead of sending a message to one coordinator, each node
can send appropriate messages to the coordinators $c_1,\ldots,c_m$ in 
a round for $m=\sqrt{n}$.

\comment{
$2,3,5,6$, of Alg.it is communication with the boss. For $\sqrt{n}$ simultaneous instances we will have $\sqrt{n}$ separate bosses. Parallel execution does not influence this part of the algorithm - each leader still will send/receive $O(n)$ messages per round. As any node participates in at most $\sqrt{n}$ different instances, in each instance it simulates $O(1)$ nodes the number of messages sent/received by all nodes while communicating with bosses is $\Theta(\sqrt{n})$, thus again it can be done by Lenzen's routing \cite{Lenzen:2013:ODR:2484239.2501983}
(Lemma~\ref{l:lenzen}).
}
The main problem with parallel execution of 
of steps \ref{ag:s2}, \ref{ag:s5} and \ref{ag:s8} of Alg.~\ref{a:degreereduction}
is that each node sends a message to all its neighbours. 
%
%
Thus, a node with degree $\Delta=\omega(\sqrt{n})$ needs to send 
$$m\cdot \Delta=\sqrt{n}\cdot \omega(\sqrt{n})=\omega(n)$$
messages in $m$ parallel executions of Alg.~\ref{a:degreereduction}, which cannot be done in $O(1)$ rounds, even with help of e.g. Lenzen's routing, because of limited bandwidth of edges.

In order to overcome the above observed problem, we will take advantage
of the fact that the number of edges in each of $m=\sqrt{n}$ instances
of the problem is $O(n^{3/2})$. Thus, the overall number of messages to send 
in all instances is 
$$O\left(\sum_{v\in V}\sum_{i\in[m]} d_{G_i}(v)\right)=O\left(\sum_{i\in[m]}\sum_{v\in V} d_{G_i}(v)\right)=O\left(\sum_{i\in[m]} |E_i|\right)=O(\sqrt{n}\cdot n^{3/2})=O(n^2).$$
Hence, the amount of communication fits into quadratic number of edges
of the (congested) clique.

In our solution, we distribute communication load among so-called \emph{proxies}.
Let 
$d(u_j)=|\{(u_j,v)\,|\, (u_j,v)\in E_m\}|$ 
be the upper bound on the degree of $u_j$
in all graphs $G_1,\ldots,G_m$, due to the assumption $E_1\subseteq\cdots\subseteq E_m$.
Assume that a pool of \emph{proxy} nodes
$v_1,v_2,\ldots$ is available (see Lemma~\ref{l:auxiliary}).
We assign $l_j=\lceil d(u_j)/\sqrt{n}\rceil$ proxy nodes to 
$u_j$ for each $j\in[n]$. Altogether, we need
$$\sum_{j\in[n]}\lceil d(u_j)/\sqrt{n}\rceil\le n+\frac1{\sqrt{n}}\sum_{j\in[n]}d(u_j)=n+O\left(\frac1{\sqrt{n}}\cdot n^{3/2}\right)=O(n)$$
proxy nodes. 
The key idea is that the work of the node $u_j$ 
is split between its proxies such that each proxy node is responsible
for simulating $u_j$ in $\min\{\sqrt{n},\lceil n/d(u_j)\rceil\}$ instances of 
Alg.~\ref{a:degreereduction}.
In order to guarantee feasibility of a simulation of all $\sqrt{n}$
executions of Alg.~\ref{a:degreereduction} by the proxies in $O(1)$ rounds, we have to address
the following issues:
\begin{enumerate}

\item[(a)]
%
In order to simulate the nodes $\{u_1,\ldots,u_n\}$ in 
various instances of the \CC{} problem,
the proxies need to know the mapping
between the nodes $\{u_1,\ldots,u_n\}$ and proxies
simulating them in respective instances of the \CC{} problem.

\item[(b)]
In order to simulate $u_j$ in the $i$th instance of the problem,
the appropriate proxy should know the neighbors of $u_j$ in $G_i$. Thus,
information about neighbours of appropriate nodes should be delivered to proxies 
before the actual executions of Alg.~\ref{a:degreereduction}.

\item[(c)]
It should be possible that each proxy $v$ of $u_j$ is able to simulate 
each step of $u_j$ in all instances of the problem in which $v$ works
on behalf of $u_j$ in $O(1)$ rounds.


\end{enumerate}

Regarding (a), note that the values $d(u_j)$ can be distributed to all nodes in a single round. Using this information, each node can locally compute which proxies are assigned to particular nodes of the network in consecutive instances of the problem, assuming that the proxies are assigned in the ascending order, i.e., $v_1,\ldots, v_{l_1}$ are assigned to $u_1$, $v_{l_1+1},\ldots, v_{l_1+l_2}$ are assigned to $u_2$ and so on.

As for (b), we rely on the fact that $E_1\subseteq\cdots\subseteq E_m$. The node $u_j$ encodes information about its neighbours in all sets $E_i$ in a following way: for each edge $e$ it is enough to remember what is the smallest $i$, such that $e \in E_i$. More formally,  $u_j$ encodes it as the set 
\kn{$$T_j=\{(v,l)\,|\, (u_j,v)\in E_{l'}\text{ for }l'\ge l, (u_j,v)\not\in E_{l'}\text{ for }l'<l\}.$$
That is, $(v,l)\in T_j$ iff $(u_j,v)\in E_l, E_{l+1},\ldots,E_m$, $(u_j,v)\not\in E_1,\ldots,E_{l-1}$.} Thus, knowing $T_j$, it is possible to determine the neighbors of $u_j$ in $G_i$, for each $i\in[m]$. The set $T_j$ is delivered to all $\lceil d(u_j)/\sqrt{n}\rceil$ proxies of $u_j$ in the following way:
\begin{itemize}
\item \textbf{Stage~1.}
The set $T_j$ is split into $\lceil d(u_j)/\sqrt{n}\rceil$ subsets
of size $\sqrt{n}$, each subset is delivered to a different proxy
of $u_j$.
\item \textbf{Stage~2.}
The subset of $T_j$ of size $\sqrt{n}$ delivered to a proxy of $u_j$ 
in Stage~1 is delivered to all other proxies of $u_j$.
\end{itemize}
In order to perform Stages~1 and 2 in $O(1)$ rounds, we apply
Lenzen's routing algorithm \cite{Lenzen:2013:ODR:2484239.2501983} (Lemma~\ref{l:lenzen}),
which works in $O(1)$ rounds provided each node has $O(n)$ messages
to send and $O(n)$ messages to receive.
Note that $u_j$ has $d(u_j)=O(n)$ messages to be transmitted
and each proxy has $O(\sqrt{n})=O(n)$ messages to receive in Stage~1.
In Stage~2, each proxy of $u_j$ is supposed to deliver and receive
$$O\left(\sqrt{n}\cdot\frac{d(u_j)}{\sqrt{n}}\right)=O(n)$$
messages.

Knowing that the above issues (a) and (b) are resolved, we can also address (c). 
Thanks to the presented solution for (a), there is global
knowledge which proxy nodes are responsible for particular nodes of the
network in various instances of the problem and all proxies start with
the knowledge of the nodes simulated by them.
Thus, in each step of the algorithm, transmissions between nodes
are replaced with transmissions between appropriate proxy nodes.
It remains to verify whether proxy nodes are able to deliver messages
on behalf of the actual nodes simulated by them in \textbf{all}
instances of Alg.~\ref{a:degreereduction} simulated by them.
The number of messages supposed to be sent and received by the node 
$u_j$ in a step of of Alg.~\ref{a:degreereduction} is at most $d(u_j)$.
As each proxy of $u_j$ simulates $u_j$ in 
$\min\{\sqrt{n},\lceil n/d(u_j)\rceil\}$ instances of Alg.~\ref{a:degreereduction}, it is supposed to send/receive at most
$$O\left(d(u_j)\cdot \min\{\sqrt{n},\lceil n/d(u_j)\rceil\}\right)=O(n)$$
messages in each round.
Hence, using Lenzen's routing \cite{Lenzen:2013:ODR:2484239.2501983} (Lemma~\ref{l:lenzen}), each step of each execution
might be simulated in $O(1)$ rounds.

\comment{
Regarding (e), recall that we obtain two graphs $G_1, G_2$
at the end of Algorithm~\ref{a:degreereduction}, where $G_1$ 
has $O(n/\log\log n)$ active components and the degree of $G_2$
is $O(\log\log n)$.
}

\subsection{Parallel executions of Algorithm~\ref{a:componentreduction}}\labell{ss:parallel2}

In this section, we show feasibility of simulation of $\sqrt{n}$ instances of
Alg.~\ref{a:componentreduction} in $O(1)$ rounds.
\begin{lemma}\labell{l:parallel2}
Assume that $m=\sqrt{n}$ graphs of degree $O(\log\log n)$ are
given in the congested clique model, i.e., each node knows its
neighbors in each of the graphs.
Then, Alg.~\ref{a:componentreduction} can be executed
simultaneously on all those instances in $O(1)$ rounds.
\end{lemma}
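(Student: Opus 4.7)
The plan is to assign the bosses and coordinators of the $m=\sqrt{n}$ parallel instances across the $n$ network nodes, then verify that in every step of Alg.~\ref{a:componentreduction} each node sends and receives at most $O(n)$ messages whp, so that Lenzen's routing (Lemma~\ref{l:lenzen}) realizes every step in $O(1)$ rounds. Each instance $k\in[m]$ uses $\sqrt{n}$ bosses and one coordinator, which amounts to $n+\sqrt{n}$ distinct roles across all instances; I would let each $u_i$ serve as exactly one boss (over some instance) and use Lemma~\ref{l:auxiliary} to supply the $\sqrt{n}$ additional coordinators. Each boss needs only the edges of its sample and the identities of the leaders; each coordinator needs only the $n$ parent edges sent in line~\ref{s:ujsends}.

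Now the per-step message count. Fix a graph $G^{(k)}$ of degree $\Delta=O(\log\log n)$, so $|E^{(k)}|=O(n\log\log n)$. In the sampling step of lines~\ref{ss:send:edge:b}--\ref{ss:send:edge}, each node $u_j$ flips, for each incident edge and each sample index $i\in[\sqrt{n}]$, an independent coin of bias $1/\log\log n$. The expected number of sample messages per node per instance is $O(\Delta\sqrt{n}/\log\log n)=O(\sqrt{n})$, and each boss expects to receive $O(|E^{(k)}|/\log\log n)=O(n)$ edges. Running all $m=\sqrt{n}$ instances in parallel multiplies the per-node total to $O(n)$ while keeping the per-boss total at $O(n)$, since each node hosts $O(1)$ bosses across all instances. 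The remaining communication, namely the leader announcements to all bosses, the $(n-\textit{dist},|C_i(u_j)|,i,p_i(u_j))$ messages from bosses to their clients, and the parent edges sent to the coordinators, involves at most one message per (node, boss) pair or per (node, instance) pair and hence contributes $O(n)$ sends and $O(n)$ receives per node summed over all $\sqrt{n}$ instances. The spanning-forest and shortest-path computations inside each boss are purely local.

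The main obstacle will be upgrading these expected loads into worst-case $O(n)$ bounds that hold simultaneously at every node and every boss. Independence of the sampling and leader coins across edges, samples and instances permits a Chernoff bound showing that each individual load is $O(n)$ with probability $1-n^{-\Omega(1)}$; a union bound over the $O(n)$ (node, role) pairs completes the argument. Once this concentration is in place, Lenzen's routing delivers the messages of every step of every parallel execution in $O(1)$ rounds, establishing Lemma~\ref{l:parallel2}.
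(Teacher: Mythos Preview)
Your proposal is correct and follows essentially the same approach as the paper: allocate the $\sqrt{n}\cdot\sqrt{n}=O(n)$ bosses and $\sqrt{n}$ coordinators across the nodes, observe that the sampling step of line~\ref{ss:send:edge} is the only nontrivial load (handled by a Chernoff bound on the $O(n)$ expected messages per node and per boss), and note that all remaining steps involve at most one message per (node, boss) or (node, coordinator) pair. The paper's argument is slightly terser but structurally identical.
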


In order to prove Lemma~\ref{l:parallel2},
it is sufficient to analyze the number of messages
which nodes send/receive in the case that they 
simulate $\sqrt{n}$ instances of Alg.~\ref{a:componentreduction} 
simultaneously.
As Alg.~\ref{a:componentreduction} is executed on a graph
with degree $O(\log\log n)$, thus the number of edges 
in each instance is $O(n\log\log n)$.

As there are $\sqrt{n}$ bosses and one coordinator 
in the ``original'' Alg.~\ref{a:componentreduction},
the number of bosses increases to $\sqrt{n}\cdot\sqrt{n}=O(n)$
and the number of coordinators to $\sqrt{n}$
when $\sqrt{n}$ instances are executed simultaneously.
In \kn{step~\ref{ss:send:edge} of Alg.~\ref{a:componentreduction}, 
the node $u_j$ is supposed to deliver each of $|N(u_j)|$ incident edges 
with probability $1/\log\log n$ to each of the bosses
in an instance
of the \CC{} problem. }
Hence, $u_j$ has at most
$\sqrt{n}|N(u_j)|=O(n)$ messages to send in all $\sqrt{n}$ instances
of the \CC{} problem.
As each edge is sent to each boss independently
with probability $1/\log\log n$ and all graphs have $O(n\log\log n)$
edges,
each boss receives 
$$O\left(\frac1{\log\log n}\cdot n\log\log n\right)=O(n)$$
edges with high probability, by a standard Chernoff bound.
\kn{Thus, we can perform step~\ref{ss:send:edge} in parallel for
all instances using Lenzen routing lemma \cite{Lenzen:2013:ODR:2484239.2501983}.}
In the remaining steps of Alg.~\ref{a:componentreduction}:
\begin{itemize}
\item
the original nodes $u_j$ send a single message to each boss or to
the coordinator;
\item
the bosses send a message to all nodes of the network.
\end{itemize}
As there are $O(n)$ bosses and coordinators (each of them
participating in exactly one instance of the \CC{} problem), all $\sqrt{n}$
executions can be performed without asymptotic slowdown
of the algorithm.

\subsection{Parallel executions of Algorithm~\ref{a:cc}}\labell{ss:parallel:final}

Using Lemma~\ref{l:parallel1}, we can execute step~\ref{s:cc1} of Alg.~\ref{a:cc}
in parallel for all instances from Lemma~\ref{l:red2:hegeman}.
However, after such execution, the results are distributed among proxies, not
available in the original nodes of the network corresponding to the nodes of the graph.
It might seem that, in order to perform the remaining steps of $m=\sqrt{n}$ 
instances of Alg.~\ref{a:cc}, we can collect
appropriate information back at the ``original'' nodes and use 
Lemma~\ref{l:GP2016} and Lemma~\ref{l:parallel2}.
This however is not that simple, as we have to face the following obstacle.
The original instances $G_i(V,E_i)$ satisfied the relationship $E_1\subseteq\cdots\subseteq E_m$,
which helped to pass information about neighborhoods to the proxies.
After executions of Alg.~\ref{a:degreereduction}, this ``inclusion property''
does not hold any more.
Therefore, we continue with proxies: both executions of GPReduction
as well as an execution of Alg.~\ref{a:componentreduction} are executed
in parallel in such a way that proxies work on behalf of their
``master'' nodes. 
Thanks to the fact that each node is a proxy of $c=O(1)$ ``original'' nodes only,
Lemma~\ref{l:GP2016} and Lemma~\ref{l:parallel2} can be applied here
and give $O(1)$ round solutions for all instances with help of Lemma~\ref{l:auxiliary}.
\comment{
Lemma~\ref{l:GP2016} and Lemma~\ref{l:parallel2} can be applied here as follows.
We can simulate each round of the original executions of GPReduction
and Alg.~\ref{a:componentreduction} in $c^2$ rounds
indexed by pairs $(a,b)\in[c]^2$. In the round $(a,b)$ each proxy acts 
as a transmitter (receiver, resp.) on behalf
of the $a$th ($b$th resp.) node assigned to it. (See Lemma~\ref{l:auxiliary} for more
detailed exposition of this technique.)
Then, each node in each round is supposed to perform a subset of transmissions
of the node ``simulated'' by it in the execution performed by the original
nodes and therefore the parallel simulations work by 
Lemma~\ref{l:GP2016} and Lemma~\ref{l:parallel2}.
}
%
%
%
%
Finally, when spanning forests of all $m=\sqrt{n}$ instances of the \CC{} problem
are determined, each proxy knows the parent of the ``simulated'' node in the respective spanning trees.
As we consider $m=\sqrt{n}$ simulations and each node is the proxy of $O(1)$
nodes of the input network, information about parents of nodes in the spanning
trees can be delivered from the proxies to the original nodes in $O(1)$ rounds using 
Lenzen's routing algorithm \cite{Lenzen:2013:ODR:2484239.2501983} (Lemma~\ref{l:lenzen}).

\section{Conclusions}

In the paper, we have established $O(1)$ round complexity
for randomized algorithms solving \MST{} in the congested clique. 
In contrast to recent progress on randomized algorithms for
MST, the best deterministic solution has not been
improved from 2003 \cite{Lotker:2003:MCO:777412.777428}.
As shown in \cite{DBLP:conf/fsttcs/PS16}, the $O(\log^* n)$ round
MST algorithm from \cite{GhaffariParter2016} can be implemented
with relatively small number of messages transmitted over an
execution of an algorithm.
We believe that our solution might also be optimized in 
a similar way. However, to obtain such a result, more
refined analysis and adjustment of parameters are
necessary. 

\kn{An interesting research direction is also to study limited
variants of the general congested clique model, better
adjusted to real architectures.}

\section*{Acknowledgments}
The work of the first author was supported by the Polish National Science Centre grant
			DEC-2012/07/B/ST6/01534.

\ifarxiv

\else
\bibliographystyle{abbrv}
\bibliography{references}
\fi

\end{document}